\pgfplotsset{compat=newest}
\theoremstyle{plain}
\newtheorem{theorem}{Theorem}
\newtheorem{lemma}[theorem]{Lemma}
\newtheorem{proposition}[theorem]{Proposition}
\newtheoremstyle{note}{\topsep}{\topsep}{\slshape}{}{\scshape}{}{ }{}
\theoremstyle{note}
\newtheorem{remark}[theorem]{Remark}
\newcommand\id{\mathrm{id}}
\newcommand{\ceil}[1]{\left\lceil #1 \right\rceil}
\newcommand{\floor}[1]{\left\lfloor #1 \right\rfloor}
\newcommand{\aspace}{H_{\mathrm{as}}}
\newcommand{\complexes}{\mathbb{C}}
\newcommand{\reals}{\mathbb{R}}
\newcommand{\naturals}{\mathbb{N}}
\newcommand{\ran}[1]{\operatorname{ran}{#1}}
\newcommand\tr{\operatorname{tr}}
\newcommand{\ptr}[2]{\operatorname{tr}_{#1}{#2}}
\newcommand{\proj}[1]{P_{#1}}
\newcommand{\iprod}[2]{\langle #1 , #2 \rangle}
\newcommand{\Span}{\operatorname{span}}
\newcommand\diag{\operatorname{diag}}
\begin{document} 	
\title[New constructive counterexamples to additivity...]{New constructive counterexamples to additivity of minimum output R\'{e}nyi \textit{p}-entropy of quantum channels}

\author{Krzysztof Szczygielski}
\address[K.~Szczygielski]
{Institute of Physics, Faculty of Physics, Astronomy and Informatics, Nicolaus Copernicus University in Toru\'n, Grudzi\c{a}dzka 5/7, 87–100, Toruń, Poland}
\email[K.~Szczygielski]{krzysztof.szczygielski@ug.edu.pl}

\author{Micha{\l} Studzi\'{n}ski}
\address[M. Studzi\'{n}ski]
{Institute of Theoretical Physics and Astrophysics, Faculty of Mathematics, Physics and Informatics, University of Gda\'nsk, Wita Stwosza 57, 80-308 Gda\'nsk, Poland}
\email[M. Studzi\'{n}ski]{michal.studzinski@ug.edu.pl}

\begin{abstract}
In this paper, we present new families of quantum channels for which corresponding minimum output R\'{e}nyi \textit{p}-entropy is not additive. Our manuscript is motivated by the results of Grudka \textit{et al.}, J.~Phys.~A: Math.~Theor. \textbf{43} 425304 and we focus on channels characterized by both extensions and subspaces of the antisymmetric subspace in $\complexes^d \otimes \complexes^d$, which exhibit additivity breaking for $p>2$. 
\end{abstract}

\maketitle

\section{Introduction}
One of key problems considered in quantum information theory is sending information through noisy quantum communication channels. However, contrary to classical channels, their quantum counterpart manifests a phenomenon that the optimal rate to transmit reliably classical/quantum information sometimes is not additive under taking the tensor product of quantum channels. The question of additivity in the above sense is a fundamental one due to the result by Shor \cite{Shor2004}, since it is equivalent to the question of additivity of the Holevo quantity \cite{Holevo} or in other words, whether the product state classical capacity is additive or not. In particular, assuming additivity of the Holevo quantity we can remove its regularization in the definition of the classical capacity of a quantum channel and get a single-letter formula. Also, in the same paper of Shor it was proven that the problem of additivity of the Holevo capacity is equivalent to three other additivity conjectures: additivity of entanglement of formation, the strong super-additivity of entanglement of formation, and finally the \textit{additivity of the minimum output \textit{p}-entropy of a quantum channel}.

For a long time people conjectured that minimum output entropy, or even general R\'{e}nyi \textit{p}-entropies, of quantum channels is additive. These expectations were caused by many of additivity results obtained by certain classes of quantum channels like for example entanglement breaking channels \cite{doi:10.1063/1.1498000}, unital qubit channels \cite{doi:10.1063/1.1500791}, depolarizing channel \cite{1159773}, transpose depolarizing channel \cite{doi:10.1142/S0219749906001633}, and many others \cite{Matsumoto_2004,Alicki2004,Datta_2005,Wolf_2005}. Today we know that the structure of the set of quantum channels is much more complex in this regard. In the general situation we have examples of quantum channels for which minimum output R\'{e}nyi \textit{p}-entropies are not additive for various values of the parameter $p$, including the case when $p\rightarrow 1$ \cite{Hastings2009} reducing the problem the question of additivity for the minimum output von Neumann entropy. However, most of the existing results concern non-constructive proofs, showing additivity violation for a randomly picked channel \cite{Hayden2008,Hastings2009}. Despite the progress in this area only a few explicit constructions for quantum channels violating additivity are known up to now. We can mention here papers by Werner and Holevo \cite{doi:10.1063/1.1498491}, Grudka \textit{et al.} \cite{Grudka_2010}, and Cubitt \textit{et al.} \cite{Cubitt2008}.
The goal of this paper is to extend the known and present new results on explicit construction of additivity breaking for quantum channels. Still we left the most important problem from the perspective of practical application, which is finding constructive counterexample for the von Neumann entropy open. Nevertheless, producing new explicit families of channels with property under investigation certainly gives insight into structure of the set of quantum channels and it is a step further towards mentioned ultimate goal. 

Our approach is motivated by analysis of the maximal Schmidt coefficients of the considered subspaces, which translates to respective bounds of corresponding R\'{e}nyi \textit{p}-entropies. In particular, our contribution is the following:
\begin{enumerate}
    \item Quantum channels by extensions of the antisymmetric subspace. Here as a starting point we use results of Grudka et al. \cite{Grudka_2010} and ask what kind of subspace can be 'added' to antisymmetric one to preserve additivity breaking of quantum channels generated by the considered subspaces. Our construction relies on generalized Bell states \cite{Sych_2009} and gives new classes of additivity breaking quantum channels for any dimension $d$ and $p>2$.
    \item Quantum channels by subspaces of the antisymmetric space. This part is complementary to the first point above. Namely, we consider subspaces of the antisymmetric space still producing quantum channels with considered property of additivity breaking. Our construction works for any $p>2$ and dimensions $d$ greater than a certain transition dimension $d_0=d_0(p)$, for which the expression is delivered.
\end{enumerate}

The structure of the paper is as follows. In Section~\ref{sec:DefOfProblem} we briefly introduce the problem of the additivity breaking of the minimal output R\'{e}nyi \textit{p}-entropies. In Section~\ref{sec:Methodology} we introduce all the most important tools used during further this manuscript. Next, Section~\ref{sec:Results} contains the main results of this paper concerning points 1,2 from the above. The paper is finished with an appendix presenting technical findings used in the main text.

\section{Definition of the problem}
\label{sec:DefOfProblem}

We start with some basic notions considering quantum channels, R\'{e}nyi \textit{p}-entropy and additivity breaking. Let $B_1 (H)$, $B_1 (K)$ be Banach spaces of trace class operators over Hilbert spaces $H$ and $K$, respectively and let $\rho \in B_1(H)$ be a \emph{density operator} of some quantum system, namely $\rho \geqslant 0$ and $\tr{\rho}=1$. Denote also by $D(H)$ a set of all density operators on $H$. By compactness of $\rho$, its spectrum is necessarily pure point, $\sigma (\rho) = \{\lambda_i : i\in \naturals\} \subset [0, 1]$. We define the \emph{R\'{e}nyi \textit{p}-entropy} of $\rho$ via formula
\begin{equation}\label{eq:RenyiEntropyDef}
    S_p (\rho) = \frac{1}{1-p} \log_2 {\tr{\rho^p}} = \frac{1}{1-p} \log_2 {\sum_i \lambda_{i}^{p}},
\end{equation}
for $p \in (0, \infty) \setminus \{1\}$, where the $p$-th power of $\rho$, as well as the last equality, is to be understood via the usual functional calculus. Let now $\mathcal{N} : B_1 (H)\to B_1 (K)$ be a completely positive and trace preserving map, i.e.~a \emph{quantum channel}. The \emph{minimal output \textit{p}-entropy of $\mathcal{N}$} is defined as
\begin{align}
    S^{\mathrm{min}}_{p}(\mathcal{N}) &= \min_{\rho \in D (H)}{S_{p}(\mathcal{N}(\rho))} \\
    &= \min_{h\in H, \| h \| = 1}{S_{p}(\mathcal{N}(\proj{h}))}, \nonumber
\end{align}
where it suffices to minimize exclusively over rank-one orthogonal projections $\proj{h} = \iprod{h}{\cdot} \, h$ in $H$, i.e.~over \emph{pure states}. The minimum output \textit{p}-entropy is \emph{subadditive}, i.e.~for any two quantum channels $\mathcal{N}_1 , \mathcal{N}_2 : B_1 (H) \to B_1 (K)$ we have $S_{p}^{\mathrm{min}}(\mathcal{N}_1 \otimes \mathcal{N}_2) \leqslant S_{p}^{\mathrm{min}}(\mathcal{N}_1) + S_{p}^{\mathrm{min}}(\mathcal{N}_2)$, and, if for some $p$ one has equality for all pairs $(\mathcal{N}_1, \mathcal{N}_2)$ of channels in some dimension, one says that the minimum output \textit{p}-entropy is additive. It was shown already in numerous sources (see introduction) that there exist pairs of channels breaking the additivity condition, i.e.~such that a strict inequality
\begin{equation}
    S_{p}^{\mathrm{min}}(\mathcal{N}_1 \otimes \mathcal{N}_2) < S_{p}^{\mathrm{min}}(\mathcal{N}_1) + S_{p}^{\mathrm{min}}(\mathcal{N}_2)
\end{equation}
is satisfied. In particular, a constructive class of channels was provided in article \cite{Grudka_2010}, where the channels are induced by antisymmetric space $\complexes^d \wedge \complexes^d$.

\section{Methodology and formalism}
\label{sec:Methodology}

\subsection{Background}
 
In this paper we will be focused on the class of channels similar to the one introduced in \cite{Hayden2008,Hastings2009} and further developed in \cite{Grudka_2010}. Let $H \simeq \complexes^m$, $K\simeq \complexes^d$ be finite dimensional Hilbert spaces. We will be considering quantum channels
\begin{equation}
    \mathcal{N}, \overline{\mathcal{N}} : B(H) \to B(K),
\end{equation}
where $m\leqslant d^2$, which action on density matrix can be, via the Stinespring dilation theorem \cite{Stinespring}, represented as
\begin{equation}\label{eq:StinespringNNbar}
    \mathcal{N}(\rho) = \ptr{E}{V\rho V^*}, \quad \overline{\mathcal{N}}(\rho) = \ptr{E}{\overline{V}\rho V^T},
\end{equation}
for some appropriately chosen Hilbert space $E$ and an isometry $V : H \to K \otimes E$; here $\overline{V} = (V^*)^T$ stands for entry-wise complex conjugation of $V$. For simplicity, we put space $E$ isomorphic to $K$, i.e.~$E \simeq \complexes^d$. Since it is known \cite{Aubrun2011} that the minimum output \textit{p}-entropy $S_{p}^{\mathrm{min}}(\mathcal{N})$ does not depend on particular choice of the isometry $V$, only on its range $W = \ran{V}$, of dimension $m$, one can divide a set of all channels from $B(H)$ to $B(K)$ into equivalence classes characterized by subspace $W$. It was shown in \cite{Grudka_2010} that choosing $W$ as (being isomorphic to) an \emph{antisymmetric subspace} $\aspace \subset \complexes^d \otimes \complexes^d$ of dimension $\dim{\aspace} = \binom{d}{2} = \frac{1}{2}d(d-1)$,
 \begin{align}
     W &= \aspace \simeq \complexes^d \wedge \complexes^d \\
     &= \Span{\left\{\frac{1}{\sqrt{2}}(e_i \otimes e_j - e_j \otimes e_i) : i < j\right\}},\nonumber
 \end{align}
where $\{e_i\}_{i=1}^{d}$ is an orthonormal basis spanning $\complexes^d$, produces pair $\mathcal{N},\overline{\mathcal{N}} : B(\complexes^{\binom{d}{2}})\to B(\complexes^d)$ of channels generating additivity breaking for $p>2$. In what follows, we will show that the additivity condition is broken also for differently chosen subspaces $W$, i.e.~we provide prescriptions for constructing additivity breaking channels, beyond results in \cite{Grudka_2010}.
 
For composite channel $\mathcal{N}\otimes\overline{\mathcal{N}}$ we introduce, for computational convenience, distinction between channels $\mathcal{N}$ and $\overline{\mathcal{N}}$ by adding respectively subscripts 1 and 2 to appropriate Hilbert spaces $H$, $K$ and $E$. Likewise, channel $\mathcal{N}\otimes\overline{\mathcal{N}} : B(H_1\otimes H_2) \to B(K_1 \otimes K_2)$ can be put in a Stinespring form
 \begin{equation}
     \mathcal{N}\otimes\overline{\mathcal{N}}(\rho) = \ptr{E_1 E_2}{(V\otimes \overline{V})\rho (V^* \otimes V^T)},
 \end{equation}
 where $V\otimes \overline{V} : H_1\otimes H_2 \to K_1 \otimes E_1 \otimes K_2 \otimes E_2$ is naturally also an isometry. Similarly, one can now identify channel $\mathcal{N}\otimes\overline{\mathcal{N}}$ with a subspace $\mathcal{W} = \ran{(V\otimes \overline{V})} = W_1\otimes W_2$ (with $W_{1,2}$ isomorphic to $W$) of dimension $m^2$ in full tensor product space $K_1 \otimes E_1 \otimes K_2 \otimes E_2$. Our method of showing the additivity breaking is based on the following, well-known theorem (for the reader's convenience, we also supply a short proof of it):
 
 \begin{theorem}\label{thm:TheTheorem}
 Let $\mathcal{N}, \overline{\mathcal{N}} : B(H) \to B(K)$, where $H\simeq\complexes^m$, $K\simeq\complexes^d$, $m\leqslant d^2$, be quantum channels of a form \eqref{eq:StinespringNNbar} and denote again $\mathcal{W} = \ran{ (V\otimes\overline{V})}$. Assume that there exists a constant $C > 0$ such that $S_{p}^{\mathrm{min}}(\mathcal{N}) \geqslant C$. Also, assume that there exists a vector $\psi \in \mathcal{W}$, $\| \psi \| = 1$ and a constant $c>0$ such that $S_{p}(\ptr{E_1 E_2}{\proj{\psi}}) \leqslant c$ for some $p \in (0,\infty ) \setminus \{1\}$. Then, if
 \begin{equation}
     c<2C,
 \end{equation}
 a pair of channels $(\mathcal{N},\overline{\mathcal{N}})$ generates additivity breaking for given $p$.
 \end{theorem}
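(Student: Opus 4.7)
The plan is to bound the two sides of the additivity inequality separately and then use $c<2C$ to conclude. The lower bound on $S_p^{\min}(\mathcal{N}) + S_p^{\min}(\overline{\mathcal{N}})$ will come directly from the assumption together with a short observation that $\mathcal{N}$ and $\overline{\mathcal{N}}$ have the same minimum output $p$-entropy; the upper bound on $S_p^{\min}(\mathcal{N}\otimes\overline{\mathcal{N}})$ will come by exhibiting the given vector $\psi$ as a test state.

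First I would recall the Stinespring picture of $\mathcal{N}\otimes\overline{\mathcal{N}}$ set up just above the theorem: since $V\otimes\overline{V}:H_1\otimes H_2\to K_1\otimes E_1\otimes K_2\otimes E_2$ is an isometry with range $\mathcal{W}$, the correspondence $h\mapsto (V\otimes\overline{V})h$ is a unitary identification between unit vectors of $H_1\otimes H_2$ and unit vectors of $\mathcal{W}$. Hence the minimization in $S_p^{\min}(\mathcal{N}\otimes\overline{\mathcal{N}})$ over pure inputs is equivalent to a minimization over unit vectors of $\mathcal{W}$ of the reduced state $\mathrm{Tr}_{E_1 E_2}\ket{\psi}\!\bra{\psi}$. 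In particular, plugging in the specific witness $\psi$ from the hypothesis yields
\begin{equation}
    S_p^{\min}(\mathcal{N}\otimes\overline{\mathcal{N}}) \;\leqslant\; S_p\!\bigl(\mathrm{Tr}_{E_1 E_2}\ket{\psi}\!\bra{\psi}\bigr) \;\leqslant\; c.
\end{equation}

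Next I would argue that $S_p^{\min}(\overline{\mathcal{N}}) = S_p^{\min}(\mathcal{N})$, so $S_p^{\min}(\overline{\mathcal{N}})\geqslant C$ follows from the hypothesis for $\mathcal{N}$. This is a short computation: for any unit vector $h\in H$, writing $\overline{V}h\in K\otimes E$ and tracing out $E$ gives a partial trace whose eigenvalues equal those of the reduced state obtained from $Vh'$ for $h' = \overline{h}$; equivalently, the two output states are related by a transposition in a fixed product basis, which preserves the spectrum and hence $S_p$. Therefore
\begin{equation}
    S_p^{\min}(\mathcal{N}) + S_p^{\min}(\overline{\mathcal{N}}) \;\geqslant\; 2C.
\end{equation}

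Combining the two bounds, $S_p^{\min}(\mathcal{N}\otimes\overline{\mathcal{N}})\leqslant c<2C\leqslant S_p^{\min}(\mathcal{N})+S_p^{\min}(\overline{\mathcal{N}})$, which is precisely the strict subadditivity defining additivity breaking. The only conceptual step that needs care is the identification of pure inputs with unit vectors of $\mathcal{W}$ and the equality $S_p^{\min}(\mathcal{N})=S_p^{\min}(\overline{\mathcal{N}})$; everything else is a direct chain of inequalities. I expect the main obstacle in the broader program, rather than in this theorem itself, to be the construction in later sections of an explicit low-entropy witness $\psi\in\mathcal{W}$ with small enough $c$, since the theorem turns additivity violation into a concrete search problem on $\mathcal{W}$.
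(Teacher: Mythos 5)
Your proof is correct and follows essentially the same route as the paper: identify pure inputs with unit vectors of $\mathcal{W}$ via the isometry $V\otimes\overline{V}$, use $\psi$ as a test state to upper-bound $S_p^{\min}(\mathcal{N}\otimes\overline{\mathcal{N}})$, invoke $S_p^{\min}(\mathcal{N})=S_p^{\min}(\overline{\mathcal{N}})$, and chain the inequalities. The only small difference is that you supply the complex-conjugation argument for $S_p^{\min}(\mathcal{N})=S_p^{\min}(\overline{\mathcal{N}})$, which the paper merely asserts.
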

 
 \begin{proof}
Notice that mapping $\rho \mapsto (V\otimes \overline{V})\rho (V^* \otimes V^T)$ is a completely positive and trace preserving bijection, which uniquely identifies states in $D(H_1\otimes H_2)$ and $D(\mathcal{W})$. Therefore, one immediately notices that, for all $a \in D(\mathcal{W})$ we have
 \begin{equation}
     S_{p}^{\mathrm{min}}(\mathcal{N}\otimes\overline{\mathcal{N}}) \leqslant S_{p}(\ptr{E_1 E_2}{a}).
 \end{equation}
 In particular, we may take $a = \proj{\psi}$ being a rank one projection in $\mathcal{W}$, where $\psi\in\mathcal{W}$ satisfies all the assumptions; then, the minimum output \textit{p}-entropy of composite channel is bounded from above, $S_{p}^{\mathrm{min}}(\mathcal{N}\otimes\overline{\mathcal{N}}) \leqslant c$. Now, since $S_{p}^{\mathrm{min}}(\mathcal{N}) = S_{p}^{\mathrm{min}}(\overline{\mathcal{N}})$, we have
 \begin{align}
     S_{p}^{\mathrm{min}}(\mathcal{N}\otimes\overline{\mathcal{N}}) &\leqslant S_{p}(\ptr{E_1 E_2}{\proj{\psi}}) \\
     &\leqslant c < 2C \nonumber \\
     &\leqslant S_{p}^{\mathrm{min}}(\mathcal{N}) + S_{p}^{\mathrm{min}}(\overline{\mathcal{N}}),\nonumber
 \end{align}
which concludes the proof.
 \end{proof}

It is sometimes more convenient, from conceptual point of view, to look at the full tensor product space in Stinespring representation of channel $\mathcal{N}\otimes\overline{\mathcal{N}}$ as a composite space of four distinct subsystems, $K_1$, $K_2$, $E_1$ and $E_2$ and consider states computed in different \emph{cuts}, so to speak. For this, we introduce two Hilbert tensor product spaces
\begin{align}
    \mathcal{H} &= K_1 \otimes E_1 \otimes K_2 \otimes E_2, \\
    \tilde{\mathcal{H}} &= K_1 \otimes K_2 \otimes E_1 \otimes E_2 ,\nonumber
\end{align}
which are identified by an isometric isomorphism $\eta : \mathcal{H} \to \tilde{\mathcal{H}}$ acting on simple tensors by switching vectors in second and third position, $\eta (k_1 \otimes f_1 \otimes k_2 \otimes f_2) = k_1 \otimes k_2 \otimes f_1 \otimes f_2$. In such case, one can look at space $\mathcal{H}$ as a composite Hilbert space of two subsystems described by spaces $K_1 \otimes E_1$ and $K_2 \otimes E_2$, which may be called the cut $K_1 E_1 : K_2 E_2$. On the other hand, writing $\tilde{\mathcal{H}} = K_{12} \otimes E_{12}$ for $K_{12} = K_1 \otimes K_2$, $E_{12} = E_1 \otimes E_2$ provides different cut $K_1 K_2 : E_1 E_2$. By this distinction, one can easily see the following fact:

\begin{lemma}\label{lemma:MinimalOutEntropyUpperBound}
Let $\xi \in \tilde{\mathcal{H}}$, $\rho = \ptr{E_1 E_2}{\proj{\xi}}$ and $p>1$. Then, R\'{e}nyi \textit{p}-entropy of $\rho$ is bounded from above,
\begin{equation}
    S_{p}(\rho) \leqslant \frac{p}{1-p} \log_2 {\mu_{1}^{2}},
\end{equation}
where $\mu_1$ is the largest coefficient of the Schmidt decomposition of $\xi$ (see also Appendix \ref{appA}).
\end{lemma}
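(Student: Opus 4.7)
The plan is to exploit the Schmidt decomposition of $\xi$ in the bipartition $\tilde{\mathcal{H}} = K_{12}\otimes E_{12}$, which (after the identification $\eta$) is precisely the bipartition along which the partial trace $\ptr{E_1 E_2}{\cdot}$ acts, and then estimate the resulting sum of powers of Schmidt coefficients by a single term.

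First I would write the Schmidt decomposition in that bipartition,
\[
\xi = \sum_{i=1}^{r} \mu_i \, u_i \otimes v_i,
\]
with $\mu_1 \geqslant \mu_2 \geqslant \ldots \geqslant \mu_r > 0$, $\sum_i \mu_i^2 = 1$, and orthonormal systems $\{u_i\} \subset K_{12}$, $\{v_i\}\subset E_{12}$. Tracing out the $E_{12}$ factor then yields
\[
\rho = \ptr{E_{12}}{\proj{\xi}} = \sum_{i=1}^{r} \mu_i^2 \, \proj{u_i},
\]
so that the spectrum of $\rho$ consists exactly of the squared Schmidt coefficients $\mu_i^2$. Substituting this spectrum into the defining formula \eqref{eq:RenyiEntropyDef} gives
\[
S_p(\rho) = \frac{1}{1-p}\log_2 \sum_{i=1}^{r} \mu_i^{2p}.
\]

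The final step is the trivial one-term lower bound $\sum_i \mu_i^{2p} \geqslant \mu_1^{2p}$. The only subtlety worth flagging is the sign reversal: for $p>1$ the prefactor $\tfrac{1}{1-p}$ is negative, so bounding the argument of the logarithm from \emph{below} translates into the desired \emph{upper} bound
\[
S_p(\rho) \leqslant \frac{1}{1-p}\log_2 \mu_1^{2p} = \frac{p}{1-p}\log_2 \mu_1^2,
\]
with the marginal case $p=1$ either vacuous or recovered by continuity. I do not anticipate any real obstacle; the entire argument is a one-line estimate once the Schmidt decomposition has been written down in the correct bipartition, and the only real content of the lemma is the identification of that bipartition with the one determined by the partial trace.
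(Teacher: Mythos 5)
Your proof is correct and follows essentially the same route as the paper: write the Schmidt decomposition of $\xi$ in the cut $K_1 K_2 : E_1 E_2$, observe that the reduced state $\rho$ has the squared Schmidt coefficients as its spectrum, and bound $\sum_i \mu_i^{2p}$ from below by $\mu_1^{2p}$. Your explicit remark about the sign of the prefactor $\tfrac{1}{1-p}$ for $p>1$ is a sensible clarification but does not change the argument.
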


\begin{proof}
Let the Schmidt decomposition of $\xi$ be given by
\begin{equation}
    \xi = \sum_{i} \mu_i u_i \otimes v_i ,
\end{equation}
where $\mu_i \geqslant 0$ and $\{u_i\}$ and $\{v_i\}$ being some orthonormal bases spanning $K_1 \otimes K_2$ and $E_1 \otimes E_2$, respectively. Denote also $\mu_1 = \max{\{\mu_i\}}$. By straightforward algebra, one easily obtains
\begin{equation}\label{eq:rhoReducedcutK12E12}
    \rho = \ptr{E_1 E_2}{\proj{\xi}} = \sum_{i} \mu_{i}^{2} \proj{u_i}.
\end{equation}
Then, since $\sum_i \mu_{i}^{2p} \geqslant \mu_{1}^{2p}$, the correct inequality for $S_p (\rho)$ follows immediately from equation \eqref{eq:RenyiEntropyDef}.
\end{proof}

As the preceding lemma suggests, the largest Schmidt coefficient $\mu_1$ becomes the key quantity for computing the upper bound $c$ of $S_{p}^{\mathrm{min}}(\mathcal{N}\otimes\overline{\mathcal{N}})$. There exists an elegant geometrical bound for $\mu_1$ determined solely by subspaces involved (for proof, see \cite[Lemma 8.7]{Aubrun_Szarek_2017}, or \cite[Lemma 3.3]{Hayden2008}, and Appendix \ref{app:lemmaUpperBoundValueProof}):

\begin{lemma}\label{lemma:UpperBoundValue}
Let $W_1$ and $W_2$ denote subspaces in $K_1 \otimes E_1$ and $K_2 \otimes E_2$, respectively, isomorphic to a certain subspace $W \subset K \otimes E$. Denote $\mathcal{W} = W_1 \otimes W_2 \subset \mathcal{H}$ and let $\psi^+$ be a maximally entangled state between $W_1$ and $W_2$ (i.e.~in cut $K_1 E_1 : K_2 E_2$). Then, the largest coefficient $\mu_1$ of Schmidt decomposition of vector $\eta (\psi^+) \in \tilde{\mathcal{W}}$, i.e.~in cut $K_1 K_2 : E_1 E_2$, satisfies inequality
\begin{equation}
    \mu_{1}^{2} \geqslant \frac{\dim{W}}{\dim{K}\dim{E}} .
\end{equation}
\end{lemma}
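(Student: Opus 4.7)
My plan is to use the variational characterisation of the largest Schmidt coefficient,
\begin{equation*}
\mu_1 = \sup_{\|a\|=\|b\|=1}\,|\langle a\otimes b|\eta(\psi^+)\rangle|,
\end{equation*}
and to exhibit a concrete unit product vector $|\phi_K\rangle\otimes|\phi_E\rangle \in K_{12}\otimes E_{12}$ whose overlap with $\eta(\psi^+)$ already achieves the claimed bound.

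First I would fix an orthonormal basis $\{w_i\}_{i=1}^{m}$ of $W\subset K\otimes E$, with expansions $w_i = \sum_{k,e} c^{(i)}_{k,e}\,e_k\otimes f_e$ in product bases $\{e_k\}$ of $K$ and $\{f_e\}$ of $E$. Because $\mathcal{N}$ and $\overline{\mathcal{N}}$ are related by complex conjugation of the Stinespring isometry, one naturally has $W_1 = W$ and $W_2 = \overline{W}$, so that
\begin{equation*}
|\psi^+\rangle = \frac{1}{\sqrt m}\sum_{i=1}^{m} |w_i\rangle\otimes|\overline{w_i}\rangle
\end{equation*}
is a maximally entangled state between them. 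Applying $\eta$ and reordering tensor factors gives
\begin{equation*}
\eta(\psi^+) = \frac{1}{\sqrt m}\sum_{i,k_1,k_2,e_1,e_2} c^{(i)}_{k_1,e_1}\,\overline{c^{(i)}_{k_2,e_2}}\,|k_1,k_2\rangle_{K_{12}}\otimes|e_1,e_2\rangle_{E_{12}}.
\end{equation*}
Next, I would test against the product of canonical maximally entangled states $|\phi_K\rangle = (\dim K)^{-1/2}\sum_k e_k\otimes e_k$ and $|\phi_E\rangle = (\dim E)^{-1/2}\sum_e f_e\otimes f_e$. Their diagonal form pins $k_1 = k_2$ and $e_1 = e_2$ in the inner product, and orthonormality $\sum_{k,e}|c^{(i)}_{k,e}|^2 = \|w_i\|^2 = 1$ then collapses the sum to
\begin{equation*}
\langle\phi_K\otimes\phi_E|\eta(\psi^+)\rangle = \frac{1}{\sqrt{m\dim K\dim E}}\sum_{i,k,e}|c^{(i)}_{k,e}|^2 = \sqrt{\frac{\dim W}{\dim K\dim E}}.
\end{equation*}
Squaring and invoking the variational bound yields the lemma.

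I expect the main subtlety to lie in the choice of maximally entangled state: using $|\overline{w_i}\rangle$ rather than $|w_i\rangle$ on the second factor is essential, since only then do the cross terms combine into the nonnegative quantities $|c^{(i)}_{k,e}|^2$ whose sum is controlled by orthonormality; pairing $|w_i\rangle$ with $|w_i\rangle$ would leave the uncontrollable bilinear sum $\sum_{k,e}(c^{(i)}_{k,e})^2$ instead. In the present setup this complex-conjugate pairing is also the natural one because $W_2 = \overline{W}$ by construction of the channel $\overline{\mathcal{N}}$.
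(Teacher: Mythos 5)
Your proof is correct, and the core move is the same as the paper's: lower-bound $\mu_1$ by testing $\eta(\psi^+)$ against the single product vector $\phi_K\otimes\phi_E\in K_{12}\otimes E_{12}$. The paper phrases this as $\mu_1^2\geqslant\tr\bigl[(P^+\otimes Q^+)\proj{\psi^+}\bigr]=|\langle\psi^+,\Psi^+\rangle|^2$, which is your variational inequality dressed up through the operator estimate $\id_{E_1E_2}\geqslant Q^+$; the test state is identical.

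Where you genuinely improve on the paper is in handling the complex conjugation, and you are right to flag this as ``the main subtlety.'' The paper sets $\theta=\sum_i w_i\otimes w_i$ (no conjugation) and $\Psi^+=(\theta+\theta^\perp)/\sqrt{\dim K\dim E}=\tfrac{1}{d}\sum_i b_i\otimes b_i$, and then asserts $P^+\otimes Q^+=\proj{\Psi^+}$. This identification fails for a generic complex basis $\{b_i\}$: writing $b_i=Y(e_k\otimes f_e)$ for a unitary $Y$ on $K\otimes E$, one finds $\sum_i b_i\otimes b_i=\sum_{k,e}(e_k\otimes f_e)\otimes(YY^T)(e_k\otimes f_e)$, so $\eta(\Psi^+)$ is a product across $K_1K_2:E_1E_2$ only when $YY^T$ factorizes as $U\otimes V$; for a general $W$ with non-real $\{w_i\}$ it does not. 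Your computation with $\psi^+=\tfrac{1}{\sqrt m}\sum_i w_i\otimes\overline{w_i}$ sidesteps this entirely: the overlap collapses to $\sum_{i,k,e}|c^{(i)}_{k,e}|^2=m$ by orthonormality, with no hidden assumption on the basis. (It is also the correct choice structurally, since $\mathcal{W}=\ran(V\otimes\overline{V})=W\otimes\overline{W}$ and $\sum_i w_i\otimes w_i$ need not even lie in $\mathcal{W}$.) So your argument is not only correct but is effectively a repair of the paper's proof at this point.
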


\subsection{The method}

\begin{lemma}\label{lemma:LowerBoundValue}
Let $W$ be chosen in such way that the largest Schmidt coefficient $\mu_1$ for every unit vector $\xi \in W$ satisfies inequality $\mu_{1}^{2} \leqslant A$ for some constant $A \in (0,1)$. Then, the lower bound $C$ for minimal output \textit{p}-entropy $S_{p}^{\mathrm{min}} (\mathcal{N})$ can be chosen as
\begin{equation}
    C = \frac{1}{1-p}\log_2 {\left[ (1-A)^p + A^p \right]}.
\end{equation}
\end{lemma}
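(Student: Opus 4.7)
The plan is to unwrap the minimum in $S_p^{\min}(\mathcal{N})$ over pure input states, translate the Schmidt-coefficient assumption into a bound on the spectrum of each output, and finish with a scalar inequality on probability vectors with a capped largest entry. Concretely, for any unit $h \in H$ set $\psi = V h \in W$; since $V$ is isometric, $\|\psi\| = 1$, and by hypothesis the largest Schmidt coefficient of $\psi$ in the $K : E$ cut satisfies $\mu_1^2 \leq A$. Writing $\psi = \sum_i \mu_i\, u_i \otimes v_i$ and tracing out $E$ yields $\mathcal{N}(\proj{h}) = \sum_i \mu_i^2 \proj{u_i}$, so the spectrum of the output is the probability vector $\lambda_i = \mu_i^2$ with $\lambda_i \in [0,A]$ and $\sum_i \lambda_i = 1$.

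The core step is then the scalar inequality $\sum_i \lambda_i^p \leq A^p + (1-A)^p$ for every such probability vector. I would prove it by noting that $\lambda \mapsto \sum_i \lambda_i^p$ is convex on the compact convex polytope $\{\lambda_i \in [0,A],\; \sum_i \lambda_i = 1\}$, so its maximum is attained at an extreme point, which by the usual active-constraint count has the form $(A, \ldots, A, r, 0, \ldots)$ with $k$ copies of $A$ and a residual $r = 1 - kA \in [0, A)$. Using the superadditivity $(x+y)^p \geq x^p + y^p$ for $x, y \geq 0$ and $p \geq 1$, together with $(k-1)^p \geq k-1$, one estimates
$(1-A)^p = \big((k-1)A + r\big)^p \geq (k-1)^p A^p + r^p \geq (k-1)A^p + r^p$,
which rearranges to $kA^p + r^p \leq A^p + (1-A)^p$, as required.

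Combining, since $1 - p \leq 0$, the prefactor $\frac{1}{1-p}$ in the R\'enyi entropy reverses inequalities, so $S_p(\mathcal{N}(\proj{h})) = \frac{1}{1-p}\log_2 \sum_i \lambda_i^p \geq \frac{1}{1-p}\log_2[A^p + (1-A)^p] = C$, and minimizing over unit $h$ gives $S_p^{\min}(\mathcal{N}) \geq C$. The only non-routine ingredient is the scalar inequality of step two; once the extreme-point reduction is in place, superadditivity of $x \mapsto x^p$ closes it in one line, so I expect the main obstacle to be pedagogical rather than technical, namely keeping the direction of the inequality straight through the sign change introduced by $1-p$.
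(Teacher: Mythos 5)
Your proof is correct, and it takes a genuinely different route from the paper's. The paper's own argument packs the spectral comparison into a majorization statement and then invokes Schur concavity of the R\'enyi entropy: it observes that the output eigenvalue vector $\vec{r}=(\mu_i^2)$ with $\mu_1^2\leq A$ is majorized by $\vec{a}=(\max\{A,1-A\},\min\{A,1-A\},0,\ldots,0)$, and Schur concavity immediately yields $S_p(\rho)\geq S_p(\diag\vec a)=C$. You instead prove the underlying scalar inequality $\sum_i\lambda_i^p\leq A^p+(1-A)^p$ from scratch: convexity of $\lambda\mapsto\sum_i\lambda_i^p$ on the capped simplex $\{\lambda_i\in[0,A],\sum_i\lambda_i=1\}$ reduces the maximization to extreme points of the form $(A,\ldots,A,r,0,\ldots)$, and superadditivity $(x+y)^p\geq x^p+y^p$ together with $(k-1)^p\geq k-1$ closes the estimate. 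The trade-off is clear: the paper's version is shorter once Schur concavity is taken as a black box, whereas yours is self-contained and elementary, not requiring the reader to know the majorization machinery. Both arguments are silently tight only when $A\geq\frac12$ (the two-point comparison vector is then actually achievable in the capped simplex), which is the regime used in all the paper's applications; for $A<\frac12$ both remain valid upper bounds but are no longer sharp. One presentational nit worth tightening: you should note explicitly that $p>1$ is in force, since convexity of $x\mapsto x^p$, the superadditivity step, and the sign of $1/(1-p)$ all rely on it, and the formula \eqref{eq:RenyiEntropyDef} degenerates at $p=1$.
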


\begin{proof}
Denote by $D(H)$ and $D(W)$ convex cones of density operators in $B(H)$ and $B(W)$, respectively. Clearly, mapping $\rho \mapsto V \rho V^*$ is a completely positive, trace preserving continuous bijection, therefore we can identify states in these two spaces. This implies that
\begin{align}
    S_{p}^{\mathrm{min}}(\mathcal{N}) &= \min_{h \in H, \| h \|=1}{S_p (\ptr{E}{V\proj{h}V^*})} \\
    &= \min_{w \in W, \| w \| = 1}{S_p (\ptr{E}{\proj{w}})}.\nonumber
\end{align}
Now, let $w\in W$ be given by its Schmidt decomposition, $w = \sum_i \mu_i k_i \otimes f_i$ for $\{k_i\}$, $\{f_i\}$ orthonormal bases in $K$ and $E$, respectively and let $\rho = \ptr{E}{\proj{w}} = \sum_i \mu_{i}^{2} \proj{k_i}$. Let also $\vec{r} = (\mu_{i}^{2}) \in \reals_{+}^{\dim{K}}$ be a vector of its eigenvalues, such that $\rho = \diag{(\mu_{i}^{2})}$. Define also vector $\vec{a}\in\reals_{+}^{\dim{K}}$ by
\begin{equation}
    \vec{a} = (\max{\{A,1-A\}}, \min{\{A,1-A\}}, 0, \, ... \, , \, 0)
\end{equation}
and let $\alpha = \diag{\vec{a}}$ denote a diagonal density matrix of spectrum given by components of $\vec{a}$ (in any order). Then, by the assumed inequality $\mu_{1}^{2} \leqslant A$ we have, after easy algebra, $\vec{r} \preceq \vec{a}$, where $\preceq$ denotes the preorder of \emph{majorization} in $\reals_{+}^{\dim{K}}$ (see Appendix \ref{appMajorization} for exact definition). Since R\'{e}nyi \textit{p}-entropy is a \emph{Schur concave function}, we have $S_{p}(\alpha) \leqslant S_{p}(\rho)$ whenever $\vec{r}\preceq \vec{a}$. However, computing directly, we obtain
\begin{align}
    S_{p}(\alpha) &= \frac{1}{1-p} \log_{2}{\tr{\alpha^p}} \\
    &= \frac{1}{1-p}\log_{2}\left[ (1-A)^p + A^p \right] = C.\nonumber
\end{align}
Then $S_{p}(\ptr{E}{\proj{w}}) \geqslant C$ for every $w\in W$, and so the minimal output \textit{p}-entropy is also bounded by $C$.
\end{proof}

By combining all the lemmas outlined above we arrive at the following result, which may be understood as a sufficient condition -- and a practical \emph{test} -- for additivity breaking for channels defined by various subspaces $W$ in $K \otimes E$, suitable in the regime $p > 1$ (we will elaborate on this limitation). The following theorem will be directly applied in the next section.

\begin{theorem}\label{thm:TheTheorem2}
Let $\mathcal{N}, \overline{\mathcal{N}} : B(H) \to B(K)$, where $H\simeq\complexes^m$, $K\simeq\complexes^d$, $m\leqslant d^2$, be quantum channels of a form \eqref{eq:StinespringNNbar} identified by a subspace $W\subset K \otimes E$. Assume the largest Schmidt coefficient $\mu_1$ of any vector $\xi \in W$, $\| \xi \| = 1$, satisfies inequality $\mu_{1}^{2} \leqslant A$ for some $A \in (0,1)$. The following statements hold:
\begin{enumerate}
    \item \label{item:TheTheorem2item1} If it holds that
\begin{equation}\label{eq:BreakingInequality}
    \frac{\dim{W}}{\dim{K}\dim{E}} > \left[ (1-A)^p + A^p \right]^{\frac{2}{p}},
\end{equation}
then a pair of channels $(\mathcal{N},\overline{\mathcal{N}})$ generates additivity breaking of minimum R\'{e}nyi \textit{p}-entropy for given $p > 1$.
\item \label{item:TheTheorem2item2} If it holds that
\begin{align}\label{eq:Limitation}
    \frac{\dim{W}}{\dim{K}\dim{E}} &> \max{\{A^2 , (1-A)^2\}} \\
    &= \begin{cases} A^2, & A \geqslant \frac{1}{2}, \\ (1-A)^2, & A < \frac{1}{2}, \end{cases} \nonumber
\end{align}
then there exists some $p_0 > 1$ such that a pair of channels $(\mathcal{N},\overline{\mathcal{N}})$ generates additivity breaking of minimum R\'{e}nyi \textit{p}-entropy for all $p > p_0$. Parameter $p_0$ may be further roughly estimated by
\begin{equation}\label{eq:Theorem2p0Bound}
    p_0 \geqslant \left( 1 + \log_{4}{\frac{\dim{W}}{\dim{K}\dim{E}}}\right)^{-1}.
\end{equation}
\end{enumerate}
\end{theorem}

\begin{proof}
Ad \ref{item:TheTheorem2item1}. Assume $p > 1$ and denote by $\psi^+$ a maximally entangled state in space $\mathcal{W} = W_1 \otimes W_2$. Then, the largest Schmidt coefficient of vector $\eta(\psi^+) \in \tilde{\mathcal{W}}$, i.e.~in cut $K_1 K_2 : E_1 E_2$, satisfies
\begin{equation}
    \mu_{1}^{2} \geqslant \frac{\dim{W}}{\dim{K}\dim{E}},
\end{equation}
which comes from lemma \ref{lemma:UpperBoundValue}. In result, the R\'{e}nyi \textit{p}-entropy of state $\rho = \ptr{E_1 E_2}{\proj{\eta(\psi^+)}}$ is, by taking $\xi = \eta (\psi^+)$ in lemma \ref{lemma:MinimalOutEntropyUpperBound}, bounded from above,
\begin{equation}
    S_{p}(\rho) \leqslant \frac{p}{1-p} \log_{2}{\frac{\dim{W}}{\dim{K}\dim{E}}} = c.
\end{equation}
However, after choosing $\psi = \psi^+$ in theorem \ref{thm:TheTheorem}, one concludes that $c$ is also the upper bound of $S_{p}^{\mathrm{min}}(\mathcal{N} \otimes\overline{\mathcal{N}})$. On the other hand, if largest Schmidt coefficient $\mu_1$ of any unit vector $\xi \in W$ satisfies $\mu_{1}^{2} \leqslant A$, then by lemma \ref{lemma:LowerBoundValue} we have the lower bound for minimal output \textit{p}-entropy of channel $\mathcal{N}$ defined by $W$,
\begin{equation}
    S_{p}^{\mathrm{min}}(\mathcal{N}) \geqslant \frac{1}{1-p}\log_2 {\left[ (1-A)^p + A^p \right]} = C.
\end{equation}
The additivity breaking condition $c < 2C$ provided by theorem \ref{thm:TheTheorem} can be then shown to be equivalent to claimed inequality \eqref{eq:BreakingInequality} after simple algebra involving monotonicity of logarithm and under assumption $p > 1$.

Ad \ref{item:TheTheorem2item2}. Let $u = \frac{\dim{W}}{\dim{K}\dim{E}}$ for brevity. The inequality \ref{eq:BreakingInequality} may be conveniently rewritten as
\begin{equation}\label{eq:Limitation2}
    u > \| (A, 1-A) \|_{p}^{2}
\end{equation}
for $\| \cdot \|_p$ being the usual $l^p$ norm, here invoked in $\reals^2$. Let us define an open region $\mathcal{S}_p \subset [0,1]^2$ of real plane via condition
\begin{equation}
    \mathcal{S}_p = \{ (A, u) : \| (A, 1-A) \|_{p}^{2} < u < 1\}.
\end{equation}
Every $\mathcal{S}_p$ is then bounded from below by a curve $\kappa_p (A) = \| (A, 1-A) \|_{p}^{2}$. Since $l^p$ norm is monotone with $p$, $\|\cdot\|_\infty \leqslant \| \cdot \|_p \leqslant \| \cdot \|_q$ whenever $q < p$, we infer $\mathcal{S}_q \subset \mathcal{S}_p$ for all $q < p$ i.e.~curve $\kappa_p$ lays below curve $\kappa_q$. For $p = 1$ we have $\| a \|_{1}^{2} = (A + 1 - A)^2 = 1$ so the inequality \eqref{eq:Limitation2} cannot be satisfied (otherwise $W$ could not be a subspace in $K\otimes E$) and $\mathcal{S}_1 = \emptyset$. For the same reason, all sets $\mathcal{S}_p$ for $p < 1$ are empty as well and so a net $(\mathcal{S}_p)_{p > 1}$ is increasing (in sense of inclusions). Since all $l^p$ norms are bounded from below by $\|\cdot\|_\infty$, the supremum norm, we see $(\mathcal{S}_p)_{p>1}$ is also upper bounded by a \emph{maximal} set
\begin{equation}
    \mathcal{S}_\infty = \bigcup_{p>1} \mathcal{S}_p = \{(A,u) : u > \| (A, 1-A) \|_{\infty}^{2}\}.
\end{equation}
Naturally $\| (A, 1-A)\|_{\infty} = \max{\{A, 1-A\}}$, the \emph{maximum norm}, so \eqref{eq:Limitation} comes by simple algebra. Since $(\mathcal{S}_p)_{p>1}$ is increasing, define, for given point $(A,u)$,
\begin{equation}
    \mathcal{S}_{p_0} = \inf\{\mathcal{S}_p : (A,u)\in S_p\},
\end{equation}
in sense of inclusions. Such set is bounded from below by $\kappa_{p_0}$ and $\mathcal{S}_{p_0} \subset \mathcal{S}_{p}$ for all $p > p_0$. Then, while $(A,u) \notin \mathcal{S}_{p_0}$, we still have $(A,u) \in \mathcal{S}_{p}$ for $p > p_0$, i.e.~point $(A,u)$ describes a subspace in $K\otimes E$ which breaks additivity for all $p>p_0$. Finally, a claimed, rough estimation for lower bound of $p_0$ may be found as follows. Choose any $u$ and let $p_* > 1$ be a parameter specifying a curve $\kappa_{p_*}$ containing a point $(\frac{1}{2},u)$. Then, any other point $(A, u)$ with $A\neq\frac{1}{2}$ will necessarily lay on some curve $\kappa_p$ being more convex than $\kappa_{p_*}$, i.e.~specified by some $p > p_*$. In other words, $u$ and $p_*$ satisfy
\begin{equation}
    u = \left\| \left(\frac{1}{2}, 1-\frac{1}{2}\right) \right\|_{p_*}^{2} = 4^{\frac{1}{p_{*}}-1},
\end{equation}
which then yields
\begin{equation}
    p_* = \frac{1}{1 + \log_{4}{u}},
\end{equation}
which is the claim. This finalizes the proof.
\end{proof}

In fig. \ref{fig:BreakingRegion} we present a visualization of different regions $\mathcal{S}_p$ together with bounding curves $\kappa_p$ for various values $p > 1$. Each point $(A,u)$, where $u = \frac{\dim{W}}{\dim{K}\dim{E}}$, represents some class of subspaces in $K\otimes E$ which share the same upper bound $A$ for square of the largest Schmidt coefficient of unit vector. Note that in the case of antisymmetric space $\aspace \simeq \complexes^d \wedge \complexes^d$ we have $\dim{\aspace} = \binom{d}{2}$ and formula \eqref{eq:Theorem2p0Bound} of theorem \ref{thm:TheTheorem2} yields the lower bound for $p_0$ to be
\begin{equation}
    p_0 \geqslant \left(1 + \log_{4}\frac{d-1}{2d}\right)^{-1} > \left(1 + \log_{4}\frac{1}{2}\right)^{-1} > 2
\end{equation}
after simple algebra. In result, for every $d \geqslant 2$ one finds a set $\mathcal{S}_{p_0}$ for some $p_0 > 2$  such that $(\frac{1}{2},u) \in \bigcap_{p > p_0} \mathcal{S}_p$, i.e.~antisymmetric space generates additivity breaking for $p$, $d$ sufficiently large and we recover the famous result by Grudka \textit{et al.} \cite{Grudka_2010}.

\begin{figure}[h!]
    \centering
    \begin{tikzpicture}[
    declare function={
    func(\x) = (\x<=0.5) * (1 - \x) * (1 - \x)  +
     (\x>0.5) * (\x*\x);
    }]
    
    \begin{axis}[
        ylabel = {$u = \frac{\dim{W}}{\dim{K}\dim{E}}$},
        ylabel style={rotate=-90},
        xlabel = {A},
        ymax=1, xmax=1,
        ymin=0, xmin=0,
        xtick = {0, 1},
        ytick = {0, 1},
        extra x ticks = {0.5}, extra x tick labels = {$\frac{1}{2}$},
        extra y ticks = {0.25}, extra y tick labels = {$\frac{1}{4}$},
        extra tick style={grid=major},
        unit vector ratio = 1 1]

        \path[name path=axis] (axis cs:0,1) -- (axis cs:1,1);
        
        \addplot[name path=f1, domain=0:1, thick, color=black]{func(x)};
        \addplot[thick, color=blue, fill=blue, fill opacity=0.05]
            fill between[
                of=axis and f1,
                soft clip={domain=0:1}];
            ];
        \node[black] at (0.67,0.28) {$p=\infty$};            
                
        % p = 5
        \addplot[name path=f2, domain=0:1, thick, color=blue, dashed]{((1 - x)^5 + x^5)^(2/5)};
        \addplot[thick, color=blue, fill=blue, fill opacity=0.1]
            fill between[
                of=axis and f2,
                soft clip={domain=0:1}];
            ];
        \node[blue] at (0.5,0.4) {$p=5$};
        
        % p = 2
        \addplot[name path=f3, domain=0:1, thick, color=red, dashed]{((1 - x)^2 + x^2)};
        \addplot[thick, color=blue, fill=blue, fill opacity=0.1]
            fill between[
                of=axis and f3,
                soft clip={domain=0:1}];
            ];
        \node[red] at (0.5,0.57) {$p=2$};
        
        % p = 1.4
        \addplot[name path=f4, domain=0:1, thick, color=blue, dashed]{((1 - x)^1.4 + x^1.4)^(2/1.4)};
        \addplot[thick, color=blue, fill=blue, fill opacity=0.1]
            fill between[
                of=axis and f4,
                soft clip={domain=0:1}];
            ];
        \node[blue] at (0.5,0.75) {$p=1.4$};
        
        % p = 1.1
        \addplot[name path=f5, domain=0:1, thick, color=blue, dashed]{((1 - x)^1.1 + x^1.1)^(2/1.1)};
        \addplot[thick, color=blue, fill=blue, fill opacity=0.1]
            fill between[
                of=axis and f5,
                soft clip={domain=0:1}];
            ];
        \node[blue] at (0.5,0.93) {$p=1.1$};
    \end{axis}
    \end{tikzpicture}
    \caption{Visualization of different regions $\mathcal{S}_p$ and their bounding curves $\kappa_p$ for $p = 1.1$, $1.4$, $2.0$, $5.0$ and $\infty$ (top to bottom). Curves $\kappa_2$ and $\kappa_\infty$ are marked in red and black, respectively.}
    \label{fig:BreakingRegion}
\end{figure}
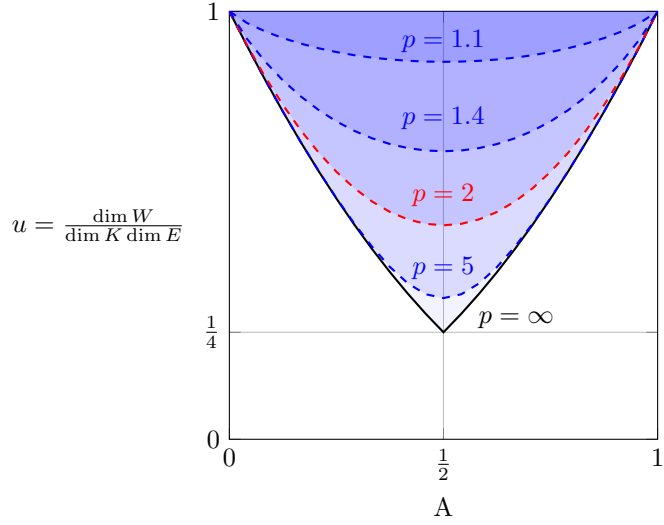

\begin{remark}
Note that the size of region $\mathcal{S}_p$ decreases when $p\to 1^+$ and vanishes completely for $p = 1$. This means that it becomes relatively more and more difficult to detect additivity breaking for lower values of $p$, while it is even impossible to achieve so in case $p=1$. This observation ultimately limits robustness of methods based on estimations of bounds for largest Schmidt coefficients, as outlined in this whole section, to cases of R\'{e}nyi \textit{p}-entropies for $p$ strictly larger than 1. Naturally, the case of Shannon (or, equivalently, von Neumann) entropy, i.e.~case $p=1$, lays outside the scope of theorem \ref{thm:TheTheorem2} as well. Subspaces determined by value of $A$ close to $\frac{1}{2}$ are expected to exhibit the largest potential for additivity breaking (with the antisymmetric subspace $\aspace$ being the most prominent example).
\end{remark}

\section{Results}
\label{sec:Results}

In this section, we present a number of different classes of additivity breaking channels, each one characterized by different choice of subspace $W$. The key point in every case will be to find a "sufficiently well" developed bounds $c$ and $C$, as stated in theorem \ref{thm:TheTheorem}. For sake of brevity, we will be denoting $d = \dim{K} = \dim{E}$ from now on.

It is a known fact that a square of the largest Schmidt coefficient of any vector of unit norm in antisymmetric space $\aspace \subset K \otimes E$ is bounded from above by $\frac{1}{2}$  (see Prop. 1 in \cite{Grudka_2010}). Below, we will present, for reader's convenience, a short proof of this statement. Let $P_{\mathrm{as}} : K\otimes E \to \aspace$ be an orthogonal projection onto antisymmetric subspace. One checks directly that $P_{\mathrm{as}}$ may be represented as
\begin{equation}\label{eq:PasDecomposition}
    P_{\mathrm{as}} = \frac{1}{2} (I - V),
\end{equation}
where $I$ is the identity and $V : K\otimes E \to K \otimes E$ is an operator, which swaps vector order in simple tensors, i.e.~it acts as $V(h \otimes f) = f \otimes h$ (for isomorphic spaces $K$ and $E$).
\begin{lemma}\label{lemma:PasInequality}
Projection $P_{\mathrm{as}} : K\otimes E \to \aspace$ satisfies
\begin{equation}
    \sup_{\| x \otimes y\| = 1}{\| P_{\mathrm{as}}(x\otimes y) \|^{2}} = \frac{1}{2}.
\end{equation}
In result, the largest Schmidt coefficient $\mu_{1}$ of any unit vector in $\aspace$ satisfies $\mu_{1}^{2} \leqslant \frac{1}{2}$.
\end{lemma}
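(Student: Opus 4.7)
The plan is to exploit the decomposition $P_{\mathrm{as.}} = \frac{1}{2}(I-V)$ to reduce everything to a short computation on simple tensors, and then transfer the bound to arbitrary unit vectors of $\aspace$ via Cauchy--Schwarz.

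First I would treat the supremum over simple tensors. For unit vectors $x,y \in K$ (so that $\| x \otimes y \| = 1$), a direct calculation gives
\begin{equation}
\| P_{\mathrm{as.}}(x\otimes y) \|^{2} = \tfrac{1}{4} \| x\otimes y - y \otimes x \|^{2} = \tfrac{1}{2}\bigl(1 - \lvert \langle x, y\rangle\rvert^{2}\bigr),
\end{equation}
where the last step uses $\|x\| = \|y\| = 1$ to expand the inner products. This expression is maximized when $\langle x, y\rangle = 0$, and in that case the value is exactly $\tfrac{1}{2}$. Thus the supremum is attained (e.g.\ by two orthonormal basis vectors) and equals $\tfrac{1}{2}$.

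For the second part, I would apply this pointwise estimate to each Schmidt term of a unit vector $w\in \aspace$. Writing $w = \sum_{i} \mu_{i} k_{i}\otimes f_{i}$ in Schmidt form with $\mu_{1} = \max_{i} \mu_{i}$, and using that $w = P_{\mathrm{as.}} w$ (so $P_{\mathrm{as.}}$ is self-adjoint and acts as the identity on $w$), I get
\begin{equation}
\mu_{1} = \lvert \langle k_{1}\otimes f_{1}, w\rangle\rvert = \lvert \langle P_{\mathrm{as.}}(k_{1}\otimes f_{1}), w\rangle\rvert \leqslant \| P_{\mathrm{as.}}(k_{1}\otimes f_{1})\| \cdot \| w\| \leqslant \frac{1}{\sqrt{2}},
\end{equation}
by Cauchy--Schwarz and the first part of the proof. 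Squaring gives $\mu_{1}^{2}\leqslant \tfrac{1}{2}$, as claimed.

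The only mildly delicate point is the expansion in the first display; one has to be careful that $\|x\otimes y - y\otimes x\|^{2} = 2 - 2\lvert \langle x, y\rangle\rvert^{2}$ holds for complex scalars (the cross terms are $\langle x, y\rangle \langle y, x\rangle = \lvert \langle x, y\rangle\rvert^{2}$). Otherwise the argument is entirely elementary, and the use of the identity $P_{\mathrm{as.}}w = w$ to move the projection from $w$ onto the simple tensor $k_{1}\otimes f_{1}$ is the conceptual step that makes the Schmidt bound fall out of the simple-tensor bound.
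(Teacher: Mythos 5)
Your proof is correct. The first half — computing $\sup_{\|x\otimes y\|=1}\|P_{\mathrm{as.}}(x\otimes y)\|^2$ via the decomposition $P_{\mathrm{as.}}=\tfrac{1}{2}(I-V)$ and the identity $\|P_{\mathrm{as.}}(x\otimes y)\|^2 = \tfrac{1}{2}\bigl(1-|\langle x,y\rangle|^2\bigr)$ — is exactly the paper's argument. For the second half the routes diverge slightly: the paper simply invokes Lemma~\ref{lemma:SchmidtCoeffOfProjection} from Appendix~\ref{appA} (which in turn rests on Lemma~\ref{lemma:SchmidtMaxAsSupremum} and a H\"older/Bessel-type expansion in a basis of the subspace), whereas you give a self-contained one-line argument: since $w\in\aspace$ and $P_{\mathrm{as.}}$ is a self-adjoint projection, $\mu_1=|\langle k_1\otimes f_1,w\rangle| = |\langle P_{\mathrm{as.}}(k_1\otimes f_1),w\rangle|\leqslant\|P_{\mathrm{as.}}(k_1\otimes f_1)\|\leqslant\tfrac{1}{\sqrt2}$ by Cauchy--Schwarz. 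Both are valid; your version is shorter and avoids unwinding two appendix lemmas, and in fact the same trick (move the projection onto the simple tensor by self-adjointness, then Cauchy--Schwarz) would give a cleaner proof of Lemma~\ref{lemma:SchmidtCoeffOfProjection} in full generality. The paper's modular phrasing pays off elsewhere, since that lemma is reused in several later arguments (e.g.\ Lemma~\ref{lemma:SchmidtOfHasSubspace} and Proposition~\ref{proposition:MaxSchmidtParthasarathy}). One cosmetic remark: when you expand $\|P_{\mathrm{as.}}(x\otimes y)\|^2$ you should note that $\|x\otimes y\|=1$ only forces $\|x\|\,\|y\|=1$, not $\|x\|=\|y\|=1$; one normalizes each factor by absorbing a scalar, which of course changes nothing.
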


\begin{proof}
Applying \eqref{eq:PasDecomposition} we have, after easy algebra,
\begin{align}
    &\sup_{\| x \otimes y\| = 1}{\| P_{\mathrm{as}}(x\otimes y) \|^{2}} \\
    &= \sup_{\|x\|,\|y\|=1}{\langle x\otimes y , \frac{1}{2}(I-V)(x\otimes y) \rangle} \nonumber \\
    &= \frac{1}{2} \sup_{\|x\otimes y\|=1}{\left(\|x\otimes y\|^{2} - |\langle x , y \rangle |^{2}\right)} \nonumber \\
    &= \frac{1}{2} \left( 1 - \inf_{\|x\otimes y\|=1}{|\langle x,y \rangle|^{2}}\right) \nonumber
\end{align}
which comes by continuity of mapping $(x,y)\mapsto |\langle x , y \rangle|^{2}$. This however is immediately found to be $\frac{1}{2}$, since taking any two mutually orthogonal vectors $x$, $y$ automatically yields the infimum in above expression to be 0. Then $\mu_{1}^{2} \leqslant \frac{1}{2}$ comes from lemma \ref{lemma:SchmidtCoeffOfProjection} from Appendix~\ref{appA}.
\end{proof}

\subsection{Extensions of antisymmetric subspace}
\label{Sec:extensions}

Informally speaking, the key idea here is to extend the antisymmetric subspace $\aspace \simeq \complexes^d \wedge \complexes^d$ in a way, which does not modify the value of maximal Schmidt coefficient too much. In particular, we seek for a space
\begin{equation}
    W = \aspace \oplus X,
\end{equation}
where $W$ is still a subspace in $K\otimes E$ and $X$ is of dimension $n \in \{1, \, \ldots \, , \, \frac{1}{2}d(d+1)\}$, perhaps for $d$ large enough. Let then $X = \Span{\{\phi_i\}}$, where $\{\phi_i\}$ stands for some orthonormal system in $K\otimes E$ of vectors orthogonal to $\aspace$ and introduce orthogonal projection operators $P_W$, $P_{\mathrm{as}}$ and $P_X$ onto $W$, $\aspace$ and $X$, such that $P_W = P_{\mathrm{as}} + P_X$. Observe that, by lemmas \ref{lemma:SchmidtMaxAsSupremum}, \ref{lemma:SchmidtCoeffOfProjection} and \ref{lemma:SchmidtCoeffOfDirectSum}, we have
\begin{align}\label{eq:MuMaxSqinequality}
    \mu_{1}^{2} &\leqslant \sup_{\|x\otimes y\|=1}{\| P_W (x\otimes y) \|^{2}} \\
    &\leqslant \sup_{\|x\otimes y\|=1}{\| P_{\mathrm{as}} (x\otimes y) \|^{2}} + \sup_{\|x\otimes y\|=1}{\| P_{X} (x\otimes y) \|^{2}} \nonumber \\
    &\leqslant \frac{1}{2} + \sup_{\|x\otimes y\|=1}{\sum_{i=1}^{n}|\langle \phi_i , x\otimes y \rangle|^2} \nonumber \\
    &\leqslant \frac{1}{2} + \sum_{i=1}^{n} \mu_{1}^{2}(\phi_i), \nonumber
\end{align}
where $\mu_{1}^{2}(\phi_i)$ is the maximal Schmidt coefficient of basis vector $\phi_i$. This means that in order to minimize the impact of adding space $X$ to $\aspace$ on the upper bound of maximal Schmidt coefficient of $\xi \in W$ (and on the lower bound $C$ for $S_{p}^{\mathrm{min}}(\mathcal{N})$), one should seek for basis $\{\phi_i\}$ of possibly smallest value of $\mu_{1}^{2}(\phi_i)$. A natural candidate for such basis could be a one given exclusively by maximally entangled vectors, since it is well known that a maximally entangled unit vector provides a lowest possible value of $\mu_{1}^{2}$, equal to reciprocal of a dimension. For this, we construct an extending space $X$ as a linear span of maximally entangled vectors.
\vskip\baselineskip
In \cite{Bennett1993}, Bennet \textit{et al.} introduced a very robust generalization of Bell states, suitable for description of $d$-dimensional systems. Let $\{h_i\}_{i=0}^{d-1}\subset K$ and $\{f_i\}_{i=0}^{d-1} \subset E$ be some orthonormal computational bases. We define $d^2$ mutually orthonormal maximally entangled vectors
\begin{equation}\label{eq:BellState}
    \psi_{r,s} = \frac{1}{\sqrt{d}} \sum_{j=0}^{d-1}\omega^{rj} h_j \otimes f_{(j+s)\,\mathrm{mod}\,d},
\end{equation}
where $\omega = e^{\frac{2\pi i}{d}}$ and $r,s\in\{0,\, \ldots \, , \, d-1\}$, which span $K\otimes E$ (we use a numbering convention from \cite{Bennett1993}). We show the following

\begin{lemma}\label{lemma:ExistenceOfBellStates}
For each dimension $d \geqslant 2$ there exist 
\begin{equation}\label{eq:sOfDBellStates}
    s(d) = \frac{d}{2} \left[2 + (d+1) \, \mathrm{mod} \, 2\right]
\end{equation}
generalized Bell states $\psi_{r,s}$ in space $\aspace^{\perp}$. Explicitly, we have $s(d) = d$ if $d$ is odd and $s(d) = \frac{3}{2}d$ if $d$ is even.
\end{lemma}

\begin{proof}
Note that every vector $x = \sum_{ij} x_{ij} h_i \otimes f_j \in K\otimes E$ is uniquely identified with matrix $[x_{ij}]$, so it is enough to analyze appropriate matrices instead. Likewise, the full tensor product $K\otimes E$ is isomorphic, as a Hilbert space, with $M_{d}(\complexes)$ equipped with Frobenius (Hilbert-Schmidt) inner product. Then, it is easy to notice that space $\aspace$ is identified with a subspace of all \emph{antisymmetric matrices}, i.e.~matrices $m \in M_{d}(\complexes)$ satisfying $m^T = -m$, and its orthogonal complement $\aspace^{\perp}$ is isomorphic to subspace of all \emph{symmetric matrices}, i.e.~such that $m^T = m$. Then, question of existence of generalized Bell states in $\aspace^{\perp}$ can be rephrased as a problem of finding Bell states described by symmetric matrices; let us call such states simply \emph{symmetric} for brevity. Fix $d \geqslant 2$ and $r \in \{0,\,  \ldots \, , \, d-1\}$. One checks that vector $\psi_{r,0}$ is \emph{diagonal}, i.e.~is  represented by a diagonal matrix
\begin{equation}
    \hat{\psi}_{r,0} = \frac{1}{\sqrt{d}}\diag\left\{1,\, \omega^{r}, \, \omega^{2r}, \, \ldots \, , \, \omega^{(d-1)r}\right\},
\end{equation}
which is clearly symmetric, so we know that there exists at least one Bell state in $H_{\mathrm{as}}^{\perp}$ for each $r$. Prescription \eqref{eq:BellState} allows for easy generation of all vectors $\psi_{r,s}$ for $s > 0$. Below we present a short textual description of this procedure, accompanied by an explicit example in case $d=6$ for reader's convenience. Having matrix $\hat{\psi}_{r,s}$, matrix $\hat{\psi}_{r,s+1}$ is built by cyclically pushing all columns of $\hat{\psi}_{r,s}$ one position to the right (with the last column of $\hat{\psi}_{r,s}$ becoming the first one in $\hat{\psi}_{r,s+1}$). In result, matrix $\hat{\psi}_{r,1}$ will have only one non-zero element $\frac{1}{\sqrt{d}}\omega^{(d-1)r}$ in  position $(d,1)$; likewise, matrix $\hat{\psi}_{r,2}$ will contain $\frac{1}{\sqrt{d}}\omega^{(d-1)r}$ in position $(d,2)$ and $\frac{1}{\sqrt{d}}\omega^{(d-2)r}$ in position $(d-1,1)$ and so forth. In $s$-th step of this process, all columns of initial matrix $\hat{\psi}_{r,0}$ are pushed by $s$ positions to the right and the only non-zero element of the first column of resulting matrix $\hat{\psi}_{r,s}$ lays in $(d-s+1)$-th row. Now, if $\hat{\psi}_{r,s}$ is expected to be non-diagonal and symmetric, number of leading zeros in its first column must equate number of leading zeros in its first row, so we necessarily need $s = d-s$ or $s = \frac{d}{2}$. This is however possible only for even $d$, so we conclude that if dimension $d$ is odd, we only have $s(d) = d$ Bell states in $\aspace^\perp$, precisely \emph{diagonal} ones, $\psi_{0,0}$, $\psi_{1,0}$, \ldots , $\psi_{d-1,0}$. Assume then $d$ is even. In such case, matrix $\hat{\psi}_{r,0}$ may be put in a block form,
\begin{equation}\label{eq:psir0}
    \hat{\psi}_{r,0} = \frac{1}{\sqrt{d}} \left[ \begin{array}{c|c} \mathbf{M}_{11} & \mathbf{0} \\ \hline \mathbf{0} & \mathbf{M}_{22} \end{array} \right],
\end{equation}
where
\begin{align}
    \mathbf{M}_{11} &= \diag\left\{1,\, \omega^{r}, \, \ldots \, , \, \omega^{\left( \frac{d}{2}-1\right) r}\right\},\\
    \mathbf{M}_{22} &= \diag\left\{\omega^{\frac{d}{2}r},\, \omega^{\left( \frac{d}{2}+1\right) r}, \, \ldots \, , \, \omega^{(d-1) r}\right\}.\nonumber
\end{align}
The only candidate, for given $r$, for a symmetric and non-diagonal Bell state is the one for $s = \frac{d}{2}$, so we must have $\hat{\psi}_{r,d/2} = \hat{\psi}_{r,d/2}^{T}$. However, as $\hat{\psi}_{r,d/2}$ was constructed by shifting columns to the right in cyclic manner, one checks that $\hat{\psi}_{r,d/2}$ is simply the block matrix \eqref{eq:psir0} with switched columns,
\begin{equation}
    \hat{\psi}_{r,d/2} = \frac{1}{\sqrt{d}} \left[ \begin{array}{c|c} \mathbf{0} & \mathbf{M}_{11} \\ \hline \mathbf{M}_{22} & \mathbf{0} \end{array} \right].
\end{equation}
The symmetry condition yields then $\mathbf{M}_{11} = \mathbf{M}_{22}$, or, explicitly,
\begin{align}\label{eq:SymmRelation}
    \omega^{rk} = \omega^{r\left(\frac{d}{2} + k\right)} \quad &\text{for all } k \in \{0,\, \ldots \, , \, d-1\},\\
    &\text{or} \quad \omega^{\frac{rd}{2}} = e^{i\pi r} = 1.\nonumber
\end{align}
However, satisfying $e^{i\pi r} = 1$ as appeared in \eqref{eq:SymmRelation} is possible only for \emph{even} $r$. In consequence, a total number of non-diagonal Bell states in $\aspace^{\perp}$ for even values of $d$ equates a number of all even naturals $r \in \{0, \, \ldots \, , \, d-1\}$ which is $\frac{d}{2}$. Together with $d$ diagonal vectors $\psi_{r,0}$ we have $s(d) = \frac{3}{2}d$ symmetric Bell states. Summarizing,
\begin{equation}
    s(d) = \begin{cases}
        d, & d \text{ odd,} \\
        \frac{3}{2}d, & d \text{ even,}
        \end{cases}
\end{equation}
which can be further checked to be equivalent to claimed formula \eqref{eq:sOfDBellStates}. As an example, below we demonstrate the aforementioned procedure in case $d=6$ (with dots representing zeros):
\small
\begin{align}
\hat{\psi}_{r,0} &= \frac{1}{\sqrt{6}}
    \left[
    \begin{array}{ccc|ccc}
        1 & \cdot    & \cdot & \cdot & \cdot & \cdot \\
        \cdot   & \omega^r & \cdot & \cdot & \cdot & \cdot \\
        \cdot   & \cdot    & \omega^{2r} & \cdot & \cdot & \cdot \\
        \hline
        \cdot   & \cdot    & \cdot & \omega^{3r} & \cdot & \cdot \\
        \cdot   & \cdot    & \cdot & \cdot & \omega^{4r} & \cdot \\
        \cdot   & \cdot    & \cdot & \cdot & \cdot & \omega^{5r}
    \end{array} \right] \\
    \xrightarrow{s+1} \hat{\psi}_{r,1} &= \frac{1}{\sqrt{6}} \left[
    \begin{array}{cccccc}
        \cdot & 1 & \cdot & \cdot & \cdot & \cdot \\
        \cdot & \cdot & \omega^r & \cdot & \cdot & \cdot \\
        \cdot & \cdot & \cdot & \omega^{2r} & \cdot & \cdot \\
        \cdot & \cdot & \cdot & \cdot & \omega^{3r} & \cdot \\
        \cdot & \cdot & \cdot & \cdot & \cdot & \omega^{4r} \\
        \omega^{5r} & \cdot & \cdot & \cdot & \cdot & \cdot
    \end{array} \right] \nonumber \\
    \xrightarrow{s+1}\hat{\psi}_{r,2} &= \frac{1}{\sqrt{6}} \left[
    \begin{array}{cccccc}
        \cdot & \cdot & 1 & \cdot & \cdot & \cdot \\
        \cdot & \cdot & \cdot & \omega^r & \cdot & \cdot \\
        \cdot & \cdot & \cdot & \cdot & \omega^{2r} & \cdot \\
        \cdot & \cdot & \cdot & \cdot & \cdot & \omega^{3r} \\
        \omega^{4r} & \cdot & \cdot & \cdot & \cdot & \cdot \\
        \cdot & \omega^{5r} & \cdot & \cdot & \cdot & \cdot
    \end{array} \right] \nonumber \\
    \xrightarrow{s+1}\hat{\psi}_{r,3} &= \frac{1}{\sqrt{6}} \left[
    \begin{array}{ccc|ccc}
        \cdot & \cdot & \cdot & 1 & \cdot & \cdot \\
        \cdot & \cdot & \cdot & \cdot & \omega^r & \cdot \\
        \cdot & \cdot & \cdot & \cdot & \cdot & \omega^{2r} \\
        \hline
        \omega^{3r} & \cdot & \cdot & \cdot & \cdot & \cdot \\
        \cdot & \omega^{4r} & \cdot & \cdot & \cdot & \cdot \\
        \cdot & \cdot & \omega^{5r} & \cdot & \cdot & \cdot
    \end{array} \right]\nonumber
\end{align}
\normalsize
Then, simple analysis shows that the relation \eqref{eq:SymmRelation} for such elements must be met. 
\end{proof}

Lemma \ref{lemma:ExistenceOfBellStates} assures that we always have a sufficient number of symmetric Bell states in our disposal in order to construct appropriate extending spaces, regardless of $d$. Below we formulate our main result of this section:

\begin{theorem}\label{thm:ResultBell} The following statements hold:
\begin{enumerate}
    \item\label{item:ResultBellitem1} 
    For every $d > 6$ there exists a sequence of extending subspaces $(X_n)_{n=1}^{n_0}$, where $\dim{X_n}=n$ and $n_0 < \ceil{\frac{d}{2}}$, each one orthogonal to $\aspace$, such that each subspace $W_n = \aspace\oplus X_n$ defines a pair of quantum channels $\mathcal{N}_n, \overline{\mathcal{N}}_n  : B(H_n) \to B(K)$, $H_n\simeq W_n$, $K \simeq \complexes^d$, providing additivity breaking of minimal R\'{e}nyi \textit{p}-entropy for all $p$ greater then some $p_0 > 2$.
    \item\label{item:ResultBellitem2} There exists an unbounded region $\mathcal{R} \subset (2, \infty ) \times \naturals$ such that for every pair $(p,d)\in\mathcal{R}$ there exists an increasing sequence of $n_0 < \ceil{\frac{d}{2}}$ extending subspaces $(X_n)_{n=1}^{n_0}$, where $\dim{X_n} = n$, each one orthogonal to space $\aspace$, such that $W_n = \aspace \oplus X_n$ defines a pair of quantum channels $\mathcal{N}_n, \overline{\mathcal{N}}_n  : B(H_n) \to B(K)$, $H_n\simeq W_n$, $K \simeq \complexes^d$, providing additivity breaking of minimal R\'{e}nyi \textit{p}-entropy for each $n$.
\end{enumerate}
\end{theorem}

\begin{proof}
Ad \ref{item:ResultBellitem1}. Let initially $d > 2$. By lemma \ref{lemma:ExistenceOfBellStates} one can construct $s(d)$ generalized, mutually orthonormal Bell states $\psi_i$ (we drop the two-index notation in favor of a single one for brevity) in space $\aspace^\perp$, where $\Span{\{\psi_i\}}$ is of dimension $s(d)$. Let us define space
\begin{equation}
    X_n = \Span{\{\psi_{j_k} : 1 \leqslant j_k \leqslant s(d), \, 1 \leqslant k \leqslant n\}},
\end{equation}
where $(j_k)_{k=1}^{n}$ is any subsequence of $(1, \, \ldots \, , \, s(d))$, of dimension $n$. By \eqref{eq:MuMaxSqinequality} we find the constant $A$ bounding the square of largest Schmidt coefficient of any unit vector $\xi\in W = \aspace \oplus X_n$,
\begin{equation}
    \mu_{1}^{2} \leqslant \frac{1}{2} + \sum_{k=1}^{n} \mu_{j_{k}}^{2} (\psi_{j_{k}}) = \frac{d+2n}{2d} = A.
\end{equation}
We emphasize here that condition $n < \frac{d}{2}$ must be satisfied for consistency of majorization scheme used in lemma \ref{lemma:LowerBoundValue}, otherwise one would have $A \geqslant 1$. This shows that possible dimensions of extending space $X_n$ must be less than $\ceil{\frac{d}{2}}$. Since clearly $A > \frac{1}{2}$, statement \ref{item:TheTheorem2item2} of theorem \ref{thm:TheTheorem2} yields that additivity breaking will be guaranteed by $W$ if
\begin{equation}\label{eq:Bell}
    \frac{\dim{W}}{d^2} = \frac{d(d-1)+2n}{2d^2} > \frac{(d+2n)^2}{4d^2},
\end{equation}
which comes by setting $\dim{K}=\dim{E}=d$ and $\dim{W} = \binom{d}{2}+n$. This condition is equivalent to strict positivity of a function $f_{d}$ given as
\begin{equation}
    f_{d}(n) = d(d-1)+2n - \frac{1}{2}(d+2n)^{2}.
\end{equation}
For convenience, consider an extension of $f_{d}$ on the entire interval $[1, \, \ceil{\frac{d}{2}})$ and denote it by the same symbol. First we notice $f_{d}$ is strictly monotonically decreasing (simply check that the derivative of $f_d$ is always negative). This means that if $f_{d}$ has a root $x_0 \in (1, \ceil{\frac{d}{2}})$, it is strictly positive for $x < x_0$ and strictly negative for $x > x_0$. This implies that additivity breaking will be observed for all spaces $X_n$ of dimensions $n < n_0 = \ceil{x_0}$. Calculating directly, one quickly checks that
\begin{equation}
    x_0 = \frac{1}{2}\left(1-d+\sqrt{2d^2-4d+1}\right).
\end{equation}
Furthermore, $x_0 \in (1, \ceil{\frac{d}{2}})$, i.e.~there exists at least one one-dimensional extending subspace $X_1$ leading to additivity breaking, iff $d > 6$, as claimed. Then, statement \ref{item:TheTheorem2item2} of theorem \ref{thm:TheTheorem2} guarantees the additivity breaking occurs for all $p > p_0$ for some $p_0 > 1$. On the other hand, since $n < \frac{d}{2}$ we have $\dim{W}/d^2 < \frac{1}{2}$, and so $p_0 > 2$ by \eqref{eq:Theorem2p0Bound}.

Ad \ref{item:ResultBellitem2}. This claim follows indirectly from the first one. Statement \ref{item:TheTheorem2item1} of theorem \ref{thm:TheTheorem2} yields the additivity breaking occurs for given $p$ when a function $f_{p,d}$ given as
\begin{equation}
    f_{p,d}(n) = \left[d(d-1)+2n\right]^p - \frac{1}{2^p}\left[ (d+2n)^p + (d-2n)^p \right]^{2}
\end{equation}
is strictly positive, which again comes directly from formula \eqref{eq:BreakingInequality}. As previously, consider an extension of $f_{p,d}$ on $[1, \, \ceil{\frac{d}{2}})$. This function is also strictly monotonically decreasing, since
\begin{align}
    \frac{d}{dx}f_{p,d}(x) < &-\frac{1}{2^p}\left((d+2x)^{p-1}-(d-2x)^{p-1}\right)\times \\
    &\times\left((d+2x)^p + (d-2x)^p\right) \nonumber
\end{align}
which is negative for all $x\geqslant 1$, $d\geqslant 2$ and $p > 2$. Then, additivity breaking will be observed for all spaces $X_n$ of dimensions $n < n_0 = \ceil{x_0}$ for $x_0$ being the root of $f_{p,d}$. Showing a mere existence (and an approximate location) of $x_0$ is easy. First, one checks directly that for all $p\in\reals$, $d>4$ we have
\begin{align}\label{eq:fpdBelldover4}
    f_{p,d}\left(\frac{7d}{32}\right) &= 2^{-9p} \left[ 32^p d^p (16d-9)^p - d^{2p} (9^p+23^p)^2 \right] \nonumber\\
    &< 0 ,
\end{align}
so $x_0$ exists in interval $[1, \, \ceil{\frac{d}{2}})$ whenever $f_{p,d}(1) > 0$ (we choose a point $7d/32$ as it provides a good approximation of $x_0$ and allows for easy check for \eqref{eq:fpdBelldover4}). This condition however may be checked numerically to be satisfied for all pairs $(p,d)$ in some unbounded region $\mathcal{R}$ of a real plane (see proposition \ref{prop:RegionRBell} for more concrete characterization), so an appropriate root $x_0$, and a limiting dimension $n_0$, exists for wide variety of $p$ and $d$. In order to find its approximate value one can seek for some well-behaved bounds, $\alpha_{p,d}$ and $\beta_{p,d}$ of function $f_{p,d}$, such that $\alpha_{p,d} < f_{p,d} < \beta_{p,d}$ (over interval $[1,\floor{\frac{7d}{32}}]$). For upper bound, notice $f_{p,d}(x) < \beta_{p,d}(x)$ for
\begin{equation}
    \beta_{p,d}(x) = [d(d-1)+2x]^p - \frac{1}{2^p} ( d+2x )^{2p}.
\end{equation}
For lower bound, note that $f_{p,d}$ is concave, so one could choose for $\alpha_{p,d}$ an affine function, connecting points $(1, f_{p,d}(1))$ and $(\frac{7d}{32}, f_{p,d}(\frac{7d}{32}))$. Then, the root $x_0$ will be constrained by roots $a_{p,d}$ and $b_{p,d}$ of $\alpha_{p,d}$ and $\beta_{p,d}$, respectively, i.e.~$x_0 \in [a_{p,d} ,\, b_{p,d}]$ and one finds
\begin{align}\label{eq:BoundsAB}
    a_{p,d} &= \frac{\frac{7d}{32}f_{p,d}(1)-f_{p,d}(\frac{7d}{32})}{f_{p,d}(1)-f_{p,d}(\frac{7d}{32})},\\
    b_{p,d} &= \frac{1}{2} \left( 1 - d + \sqrt{1-4d+2d^2} \right).\nonumber
\end{align}
One then checks numerically that obtained bounds for $x_0$ become quite tight for larger values of $d$ and $p$. In consequence, region $\mathcal{R}$ provides non-trivial sequences of extending spaces $X_n$. The number of possible extensions of $\aspace$ is then simply $\ceil{x_0}-1$.
\end{proof}

\begin{proposition}
\label{prop:RegionRBell}
The following statements hold:
\begin{enumerate}
    \item \label{RegRBellitem2} Theorem \ref{thm:ResultBell} fails in case $p\in(1,2]$ or $d\leqslant 6$ where no additivity breaking can be detected.
    \item \label{RegRBellitem3} For $p \in (2,\infty)$ there always exists $d_0 (p) \geqslant 7$ s.t.~for all $d \geqslant d_0 (p)$ there exists non-trivial sequence of subspaces $(X_n)$ leading to additivity breaking for given $p$.
    \item \label{RegRBellitem4} Similarly, for all $d \geqslant 7$ there exists some $p_0 (d) \in (2,\infty)$ s.t.~for all $p \in [p_0 (d),\infty)$ there exists non-trivial sequence of subspaces $(X_n)$ leading to additivity breaking for given $p$.
\end{enumerate}
\end{proposition}

\begin{proof}
Statement \ref{RegRBellitem2} follows from theorem \ref{thm:ResultBell}. For statements \ref{RegRBellitem3} and \ref{RegRBellitem4} it suffices to check for positivity of $f_{p,d}(1)$ due to monotonicity of function $n\mapsto f_{p,d}(n)$. Let us then consider a function $p\mapsto f_{p,d}(1)$ and recast it into a form
\begin{equation}
    f_{p,d}(1) = a^p - \sum_{k=1}^{3} c_k b_{k}^{p}
\end{equation}
for coefficients
\begin{align}
    a &= d(d-1)+2, \quad c_1 = c_3 = 1, \quad c_2 = 2, \\
    b_1 &= \frac{1}{2} (d-2)^2, \quad b_2 = \frac{1}{2}(d^2 - 4), \quad b_3 = \frac{1}{2}(d+2)^2.
\end{align}
Then, observe that for $d \geqslant 7$ we have $a > b_3 = \max{\{b_k\}_{k=1}^{3}}$ and $a^2 < \sum_{k=1}^{3} c_k b_{k}^{2}$ so by lemma \ref{lemma:ExpFunctionAsymptotics} (in Appendix \ref{app:Supplement}) invoked for $u=2$ we have $f_{p,d}(1) > 0$ for all
\begin{equation}
    p > p_0 (d) = \frac{\ln{4}}{\ln{(a/b_3)}} = \frac{\ln{4}}{\ln{\frac{2d(d-1)+4}{(d+2)^2}}} > 2.
\end{equation}
Conversely, consider a function $d \mapsto f_{p,d}(1)$ for some fixed $p > 2$ in a form $f_{p,d}(1) = a(d)^p - \sum_{k=1}^{3} c_k b_{k}(d)^p$, where we explicitly emphasized that coefficients $b_k = b_k (d)$ are polynomials in $d$. Condition $f_{p,d}(1) > 0$ is equivalent to
\begin{equation}
    \sum_{k=1}^{3} c_k \eta_k (d)^p < 1,
\end{equation}
where $\eta_k (d) = b_k (d) / a(d)$ are rational functions. Calculating directly, one checks that for $p > 2$ all functions $\eta_{k}^{p}$ tend to $2^{-p}$ when $d\to\infty$. By definition of a limit there will then exist constants $D_k$, $k \in \{1,2,3\}$, such that $c_1 \eta_1 (d)^{p}, c_3 \eta_3 (d)^{p} < \frac{1}{4}$ whenever $x > D_1$ or $x > D_3$, respectively, and $c_2 \eta_2 (d)^{p} < \frac{1}{2}$ when $x > D_2$. Then, for any $x > \max{\{D_k\}_{k=1}^{3}}$ we have $\sum_{k=1}^{3} c_k \eta_k (d)^p < 1$, as desired; careful analysis involving monotonicity of functions $\eta_{k}^{p}$ shows that such $\max{\{D_k\}_{k=1}^{3}}$ is then simply $D_3$, which we can easily obtain by solving inequality
\begin{equation}
    c_3 \eta_3 (d)^p = \frac{b_3 (d)^{p}}{a(d)^{p}} < \frac{1}{4},
\end{equation}
which yields a not necessarily nice formula for $d_0 (p)$,
\begin{equation}
    D_3 = d_0 (p) = \left\lceil\frac{2 + 4^{-\frac{1}{p}} + 2^{-\frac{1}{p}}\sqrt{16 - 7 \cdot 4^{-\frac{1}{p}}}}{2^{1-\frac{2}{p}}-1}\right\rceil
\end{equation}
which can be then checked to be always greater than 6, as claimed. This concludes the proof.
\end{proof}

Approximate values of $p_0 (d)$ and $d_0 (p)$, as briefly derived in the proof above, show existence of infinite set of parameters $(p,d)$ for which the additivity breaking channels can be constructed. We conducted a numerical analysis which shows that they bound the shape of region $\mathcal{R}$ quite well. As an example, in fig.~\ref{fig:RegionRBell} we present numerically obtained region $\mathcal{R}$ for $p\leqslant 10$ and $d \leqslant 75$. It may be also useful to characterize some \emph{minimal} values of parameters $p$ and $d$ giving rise to additivity breaking channels, i.e.~for which there exists at least one-dimensional extending subspace $X_n$; this is presented in a form of tables \ref{tab:MinimalD_Bell} and \ref{tab:MinimalP_Bell}.

\begin{figure}[ht]
    \centering
    \begin{tikzpicture}
    \begin{axis}[
        width=0.4\textwidth,
        height=0.45\textwidth,
        scale only axis,
        enlargelimits=false,
        axis equal=false,
        axis on top,
        xlabel={$p$},
        ylabel={$d$},
        ylabel style={rotate=-90}]
      \addplot graphics[xmin=0,xmax=10,ymin=0,ymax=75] {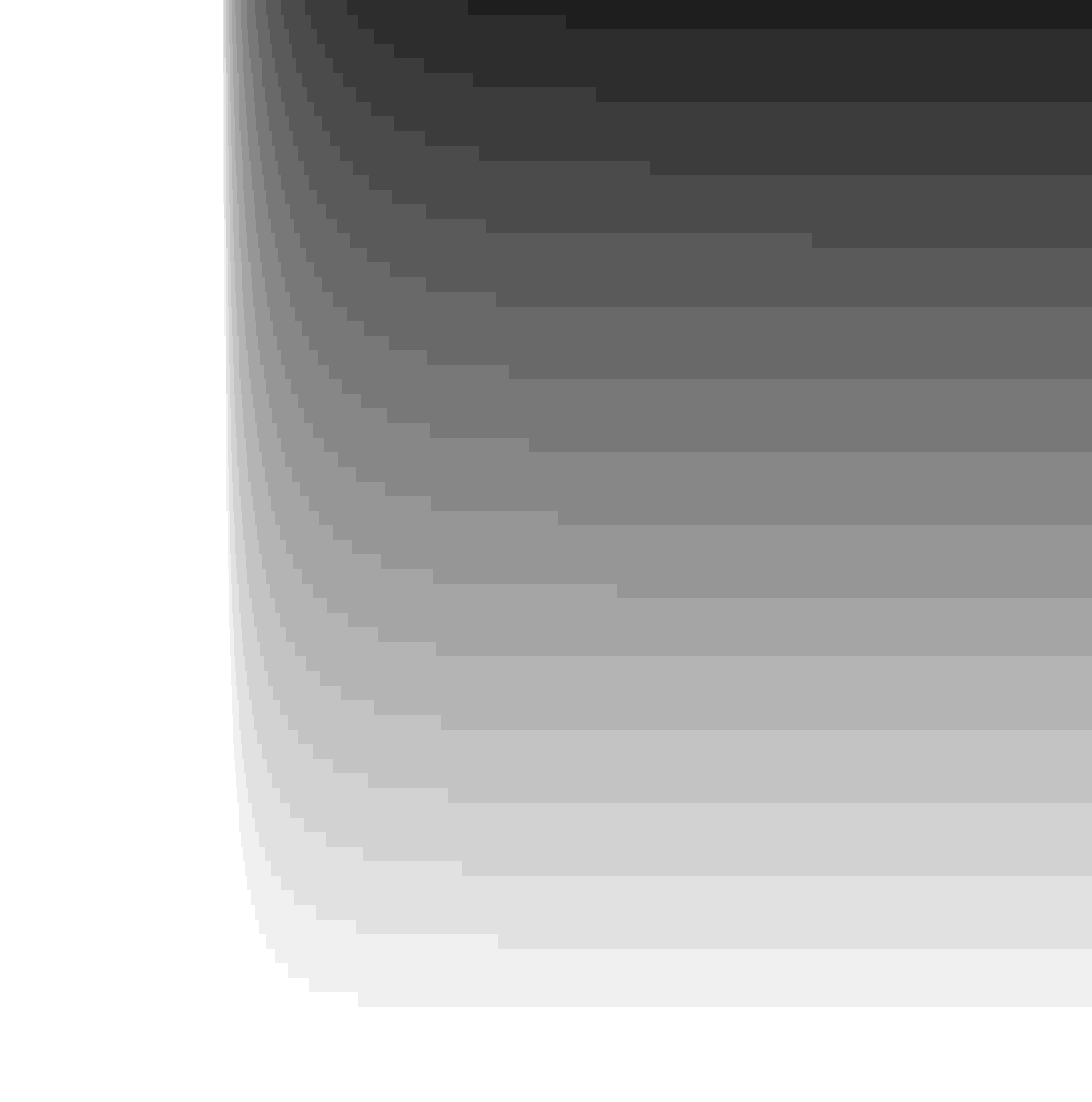};
    \end{axis}
\end{tikzpicture}
    \caption{Plot of numerically obtained region $\mathcal{R}$ appearing in a proof of theorem \ref{thm:ResultBell} for parameters $p\in [0,10]$ (horizontal axis, with increment 0.01) and $d \in [0,75]$ (vertical axis, increment 1). Each point $(p,d)$ represents a certain function $f_{p,d}$ and the shade of gray in the region encodes dimension of the largest of possible extending spaces $X_n$ determining additivity breaking channels for specified parameters $(p,d)$, equal to $\ceil{x_0}-1$ for $x_0$ the root of $f_{p,d}$. For example, for $p=4$, $d=18$ the value of $\ceil{x_0}-1$ is found to be 3, i.e.~we have $f_{18,4}(n) > 0$ for $n \in \{1,2,3\}$. In result, three possible extending spaces $X_1$, $X_2$ and $X_3$ (of dimensions 1, 2 and 3, respectively) may be constructed and each space $W_n = \aspace \oplus X_n$ ($\aspace \simeq \complexes^4 \wedge \complexes^4$) defines a pair of additivity breaking channels for $p=4$. Values of $\ceil{x_0}-1$ in the plot range from 0 (white color; no additivity breaking detection) through 1 (lightest shade) up to 15 (darkest). A ``jagged'' structure of the plot is due to a discontinuity of ceiling function. Clearly, detection of additivity breaking via theorem \ref{thm:ResultBell} fails for $p \leqslant 2$ and $d\leqslant 6$ (white color).}
    \label{fig:RegionRBell}
\end{figure}

\begin{table}[ht]
    \caption{Minimal dimension $d$ for which at least one (one-dimensional) extending subspace $X_n$, defining additivity breaking channels, may be constructed for some chosen values of $p > 2$.}
    \label{tab:MinimalD_Bell}
    \centering
    \begin{tabular}{|c|c|}\hline
        $p$           & minimal $d$ \\ \hline
        $\leqslant 2$ & no spaces   \\
        $2 + 10^{-6}$ & $>2.9 \cdot 10^6$ \\
        2.0001          & 28858        \\
        2.001          & 2890      \\
        2.01          & 293      \\
        2.1           & 33         \\
        2.2           & 19      \\
        2.5           & 11      \\
        3.0 ... 3.27 (approx.)  & 8 \\
        $\geqslant 3.28$ (approx.)  & 7 \\ \hline
    \end{tabular}
\end{table}

\begin{table}[ht]
    \caption{Minimal approximate value of $p$ ($\pm 0.001$) for which at least one (one-dimensional) extending subspace $X_n$, defining additivity breaking channels, may be constructed for some chosen values of $d > 6$.}
    \label{tab:MinimalP_Bell}
    \centering
    \begin{tabular}{|c|c|}\hline
        $d$           & minimal $p$ (approx.) \\ \hline
        $\leqslant 6$ & no spaces   \\
        7             & 3.279 \\
        8             & 2.839 \\
        9             & 2.636 \\
        10            & 2.514 \\
        15            & 2.268 \\
        20            & 2.182 \\
        100           & 2.031 \\
        500        & 2.006 \\
        1000       & 2.003 \\
        $\geqslant 2890$ & $\leqslant 2.001$ \\ \hline
    \end{tabular}
\end{table}

\subsection{Subspaces of antisymmetric space}
\label{sec:subspaces}

Here we show that appropriate space $W$ may also be chosen to be a subspace of antisymmetric space $\aspace$. Let then $W\subset\aspace$ be of dimension $\dim{W} < \binom{d}{2}$. The following technical lemma will be of importance:
\begin{lemma}\label{lemma:SchmidtOfHasSubspace}
Let $W$ be any subspace of $\aspace$ and let $P_W$ be an associated orthogonal projection. Then,
\begin{equation}
    \sup_{\|x \otimes y\| = 1}{\|P_W (x\otimes y)\|^{2}} = \frac{1}{2}.
\end{equation}
In result, the largest Schmidt coefficient $\mu_1$ of any $\xi\in W$, $\|\xi\| = 1$, satisfies $\mu_{1}^{2} \leqslant \frac{1}{2}$.
\end{lemma}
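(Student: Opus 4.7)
The plan is to compare $P_W$ with the orthogonal projection $P_{\mathrm{as.}}$ onto the full antisymmetric subspace, for which the bound $\tfrac{1}{2}$ was already established in Lemma \ref{lemma:PasInequality}. The critical observation is that since $W$ is a closed subspace of $\aspace$, one can decompose the antisymmetric subspace as $\aspace = W \oplus W'$, where $W' = \aspace \ominus W$, and correspondingly $P_{\mathrm{as.}} = P_W + P_{W'}$ as operators. Because $P_{W'}$ is itself an orthogonal projection and hence positive, this yields the operator inequality $P_W \leqslant P_{\mathrm{as.}}$.

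I would then apply this inequality to an arbitrary unit product vector $x\otimes y$ to get
\begin{equation}
    \| P_W (x\otimes y)\|^2 = \langle x\otimes y, P_W (x\otimes y)\rangle \leqslant \langle x\otimes y, P_{\mathrm{as.}}(x\otimes y)\rangle = \|P_{\mathrm{as.}}(x\otimes y)\|^2 \leqslant \frac{1}{2},
\end{equation}
where the last inequality is exactly Lemma \ref{lemma:PasInequality}. Taking the supremum over unit product vectors on the left-hand side yields $\sup_{\|x\otimes y\|=1}\|P_W(x\otimes y)\|^2 \leqslant \frac{1}{2}$. The Schmidt coefficient conclusion then follows immediately by Lemma \ref{lemma:SchmidtCoeffOfProjection} from Appendix \ref{appA}, applied to the projection $P_W$ and any unit $\xi \in W$.

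There is really no hard step here once the operator domination $P_W \leqslant P_{\mathrm{as.}}$ is noticed; the whole argument is a routine restriction of a known bound. One conceptual caveat: the statement asserts equality to $\tfrac{1}{2}$, but the operator-domination route only produces the upper bound. Equality actually requires $W$ to be ``rich enough'' — for instance, to contain one of the standard antisymmetric Bell bivectors $\frac{1}{\sqrt{2}}(e_i\otimes e_j - e_j\otimes e_i)$, which already saturates the supremum via the product vector $x = e_i$, $y = e_j$. To obtain equality in full generality one would either impose such a mild nondegeneracy on $W$ or exhibit an explicit sequence of product vectors that saturates the bound for arbitrary nonzero $W$; however, the downstream applications of the lemma use only the direction $\mu_1^2 \leqslant \tfrac{1}{2}$, so this subtlety does not affect the sequel.
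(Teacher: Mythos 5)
Your derivation of the upper bound is correct and is essentially the paper's own route: the operator domination $P_W \leqslant P_{\mathrm{as.}}$, coming from $P_{\mathrm{as.}} = P_W + P_{W'}$ with $P_{W'}\geqslant 0$, is exactly the content of Lemma~\ref{lemma:SchmidtCoeffOfSubspace}, which the paper combines with Lemma~\ref{lemma:PasInequality} and Lemma~\ref{lemma:SchmidtCoeffOfProjection} to get $\mu_{1}^{2}\leqslant\tfrac{1}{2}$. You simply unpack that lemma in place. Since only this inequality is fed into Lemma~\ref{lemma:LowerBoundValue} (with $A=\tfrac{1}{2}$) in the sequel, you have established the substantive content.

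Your caveat about the claimed equality is not merely a limitation of the operator-domination route: the equality in the lemma as stated is in fact false for a general nonzero $W\subset\aspace$. Take $d=4$ and $W=\Span\{w\}$ with the unit vector
\begin{equation}
    w = \tfrac{1}{2}\left(e_1\otimes e_2 - e_2\otimes e_1 + e_3\otimes e_4 - e_4\otimes e_3\right).
\end{equation}
Its associated $4\times4$ antisymmetric matrix $M$ satisfies $MM^{T}=\tfrac14 I$, so all four Schmidt coefficients of $w$ equal $\tfrac12$ and $\mu_{1}^{2}(w)=\tfrac14$. Since $P_W=\proj{w}$, we have $\|P_W(x\otimes y)\|^{2}=|\langle w, x\otimes y\rangle|^{2}$, and Lemma~\ref{lemma:SchmidtMaxAsSupremum} then gives $\sup_{\|x\otimes y\|=1}\|P_W(x\otimes y)\|^{2}=\mu_{1}^{2}(w)=\tfrac14<\tfrac12$. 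The paper's own proof of the equality asserts that, for any basis $\{w_i\}$ of $W$, one can choose orthonormal bases $\{h_i\}$, $\{f_i\}$ of $K$, $E$ so that each $w_i$ takes the form $\tfrac{1}{\sqrt 2}(h_{\sigma(i)_1}\otimes f_{\sigma(i)_2}-h_{\sigma(i)_2}\otimes f_{\sigma(i)_1})$; this step fails because the rank of the antisymmetric matrix of $w_i$ is a change-of-basis invariant, and a rank-$4$ element such as $w$ above can never become a simple bivector. So the lemma should read ``$\leqslant$'' rather than ``$=$''; your version is the one that is actually correct, and, as you observe, nothing downstream uses more than the inequality.
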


\begin{proof}
Let $P_{\mathrm{as}} : K\otimes E \to \aspace$ be the orthogonal projection onto antisymmetric space. Employing lemmas \ref{lemma:PasInequality} and \ref{lemma:SchmidtCoeffOfSubspace}, we automatically have, for any $\xi \in W$, $\|\xi\| = 1$,
\begin{align}
    \mu_{1}^{2}(\xi) &\leqslant \sup_{\|x\otimes y\|=1}{\| P_{W}(x\otimes y) \|^{2}} \\
    &\leqslant \sup_{\|x\otimes y\|=1}{\| P_{\mathrm{as}}(x\otimes y) \|^{2}} = \frac{1}{2}.\nonumber
\end{align}
We will show that in fact, the supremum $\frac{1}{2}$ is attainable. For this, choose any basis $\{w_i\}$ spanning $W$ and observe that for any such choice, there always exist some bases $\{h_i\}$, $\{f_i\}$ spanning spaces $K$ and $E$, respectively, and some bijection $\sigma : \{ 1, \, \ldots \, , \, \dim{W} \} \to \{ (\sigma_1, \sigma_2) : \sigma_1 < \sigma_2 \}$ into an ordered set of $\dim{W}$ pairs (note that the exact order used in the set of pairs is irrelevant) that every $w_i$ may be expressed as an antisymmetrized combination of simple tensors of a form $\{h_{\sigma_1} \otimes f_{\sigma_2}\}$, i.e.
\begin{equation}
    w_i = \frac{1}{\sqrt{2}} \left(h_{\sigma(i)_{1}}\otimes f_{\sigma(i)_{2}} -h_{\sigma(i)_{2}}\otimes f_{\sigma(i)_{1}} \right),
\end{equation}
where subscripts 1 and 2 indicate a specified element of a pair $\sigma(i) = (\sigma(i)_1, \sigma(i)_2)$. Expressing vectors $x$ and $y$ as appropriate ordered tuples $x = (x_i)$, $y = (y_i)$ with respect to bases $\{h_i\}$, $\{f_i\}$ we obtain
\begin{align}
    \langle x\otimes y , P_{W}(x\otimes y) \rangle &= \sum_{i=1}^{\dim{W}} |\iprod{w_i}{x\otimes y}|^{2} \\
    &= \frac{1}{2} \sum_{s} | x_{s_1}y_{s_2} - x_{s_2} y_{s_1} |^{2}, \nonumber
\end{align}
where we sum over all pairs $s = (s_1 , s_2)$ in the image of bijection $i\mapsto \sigma(i)$. Now it is enough to note that taking $x$, $y$ such that, say, $x_{s_1} = y_{s_2} = 1$, $x_{s_2} = y_{s_1} = 0$ for some specific pair $(s_1, s_2)$ and $x_{p_{1,2}} = y_{p_{1,2}} = 0$ for every other pair $(p_1,p_2)$ we have $\langle x\otimes y , P_{W}(x\otimes y) \rangle = \frac{1}{2}$. This concludes the proof.
\end{proof}

\begin{theorem}\label{thm:ResultSubspaces}
The following statements hold:
\begin{enumerate}
    \item\label{item:ResultSubspacesItem1} For every $d \geqslant 4$ there exists a finite sequence of subspaces $(W_k)_{k=1}^{N} \subset \aspace$, where
    \begin{equation}\label{eq:NumberOfSubspacesGeneral}
        N = \binom{d}{2} - 1 - \floor{4^{-1}d^2},
    \end{equation}
    and
    \begin{equation}
        \dim{W_{N}} = \binom{d}{2} - 1, \quad \dim{W_{k-1}} = \dim{W_k} - 1,
    \end{equation}
    such that each subspace $W_k$ defines a pair of quantum channels $\mathcal{N}_k, \overline{\mathcal{N}}_k  : B(H_k) \to B(K)$, $H_k\simeq W_k$, $K \simeq \complexes^d$, providing additivity breaking of minimal R\'{e}nyi \textit{p}-entropy for all $p\in (p_0,\infty)$ for some $p_0 > 2$.
    \item\label{item:ResultSubspacesItem2} For all $p>2$ there exists a dimension $d_0$ such that for all dimensions $d>d_0$ there exists an increasing sequence of subspaces $(W_k)$ in antisymmetric space $\aspace \simeq \complexes^d \wedge \complexes^d$ of length $l_{p,d}$ given via formula
\begin{equation}\label{eq:NumberOfSubspaces}
    l_{p,d} = \binom{d}{2}-1-\floor{4^{\frac{1}{p}-1}d^2},
\end{equation}
such that every subspace $W_k$ defines a pair of quantum channels $\mathcal{N}_k$, $\overline{\mathcal{N}}_k : B(H_k) \to B(K)$, $H_k \simeq W_k$, $K \simeq \complexes^d$, providing additivity breaking of minimal R\'{e}nyi \textit{p}-entropy.
\end{enumerate}
\end{theorem}

\begin{proof}
Ad \ref{item:ResultSubspacesItem1}. Let $W \subset \aspace$ be any subspace of dimension $\dim{W} < \binom{d}{2}$. Lemma \ref{lemma:SchmidtOfHasSubspace} yields a square of the largest Schmidt coefficient of any unit vector in $W$ is bounded by $A = \frac{1}{2}$. Statement \ref{item:TheTheorem2item2} of theorem \ref{thm:TheTheorem2} then implies that additivity breaking occurs if
\begin{equation}\label{eq:Subspaces}
    \frac{\dim{W}}{d^2} > \frac{1}{4}.
\end{equation}
From this we conclude that if there exist some natural numbers $(n_k)_{k=1}^{N}$ in interval $\mathcal{I}_d = \left(\frac{1}{4}d^2 , \binom{d}{2}\right)$ there will exist additivity breaking subspaces $(W_{k})_{k=1}^{N}$ in $\aspace$, each of dimension $\dim{W_k} = n_k$. This is the case iff $\mathcal{I}_d$ is of length greater than 1,
\begin{equation}
    \frac{1}{2}d(d-1) - \frac{1}{4}d^2 > 1,
\end{equation}
i.e.~contains at least one natural number $n_1 = n_N = \binom{d}{2}-1$, which is true for all $d \geqslant 4$. A total number $N$ of such subspaces equals a number of all natural numbers in $\mathcal{I}_d$, which can be quickly checked to be indeed given by formula \eqref{eq:NumberOfSubspacesGeneral}. Possible dimensions of subspaces $W_k$ are then
\begin{align}
    &n_1 = 1+\floor{4^{-1}d^2}, \quad n_2 = 2+\floor{4^{-1}d^2}, \\
    &... \quad n_N = \binom{d}{2}-1.\nonumber
\end{align}
For any subspace $W\subset\aspace$ and for any $d \geqslant 4$ we have $\dim{W}/d^{2} < \frac{1}{2}$. Therefore, \eqref{eq:Theorem2p0Bound} yields $p_0 > 2$.

Ad \ref{item:ResultSubspacesItem2}. Similarly to theorem \ref{thm:ResultBell}, this statement is a rephrasing of previous one. Theorem \ref{thm:TheTheorem2}, p. \ref{item:TheTheorem2item1} yields the additivity breaking condition is equivalent to strict positivity of a function
\begin{equation}
    f_{p,d}(n) = \frac{n^p}{d^{2p}} - 4^{1-p}.
\end{equation}
For given $p>2$ this yields $n > 4^{\frac{1}{p}-1}d^2$. As previously, since we seek for $n$-dimensional subspaces of the space of dimension $\binom{d}{2}$, we have to ask if there exist any natural numbers $n < \binom{d}{2}$ such that the inequality $n > 4^{\frac{1}{p}-1}d^2$ is satisfied, i.e.~if set $\naturals \cap \big(4^{\frac{1}{p}-1}d^2 , \binom{d}{2}\big)$ is nonempty. This will be guaranteed if the interval $\mathcal{I}_{p,d} = \big(4^{\frac{1}{p}-1} d^2 , \binom{d}{2}\big)$ is of a length greater than 1, i.e.
\begin{equation}
    \frac{d(d-1)}{2} - 4^{\frac{1}{p}-1} d^2 > 1,
\end{equation}
which is satisfied by all dimensions $d > d_0$ for $d_0$ given via formula
\begin{equation}\label{eq:d0pSubspaces}
    d_0 = \left\lceil 4\left(-1 + \sqrt{9-4^{\frac{1}{p}+1}}\right)^{-1} \right\rceil .
\end{equation}
In consequence, for all $p>2$ and all $d>d_0$ there will exist a finite sequence of natural numbers $(n_k)_{k=1}^{N}$ such that $f_{p,d}(n_k) > 0$, i.e.~there will exist a sequence of additivity breaking subspaces $(W_k)_{k=1}^{N}$, $\dim{W_k} = n_k$. Number of such spaces, i.e.~a length $l_{p,d}$ of this sequence is equal to total number of all natural numbers $n_k$ in the interval $\mathcal{I}_{p,d}$; this however is easily seen to be given by claimed equation \eqref{eq:NumberOfSubspaces}. Moreover, monotonicity of function $n \mapsto f_{p,d}(n)$ yields that in each case a dimension of the largest subspace $W_k$ in $\aspace$, is always found to be $\max{\{n_k\}}=\binom{d}{2} - 1$.
\end{proof}

\begin{proposition}
\label{prop:RegionRSubspaces}
The following statements hold:
\begin{enumerate}
    \item \label{RegRSubspacesitem1} The additivity breaking criterion in theorem \ref{thm:ResultSubspaces} is inconclusive in case $p \in (1,2]$ or $d \leqslant 3$, where no additivity breaking can be detected.
    \item \label{RegRSubspacesitem3} For $p\in(2,\infty)$ there always exists $d_0 (p) \geqslant 4$ s.t.~for all $d \geqslant d_0 (p)$ there exists a non-trivial sequence of subspaces $(W_k)$ leading to additivity breaking for given $p$.
    \item \label{RegRSubspacesitem4} Similarly, for all $d \geqslant 4$ there exists some $p_0 (d) \in (2,\infty)$ s.t.~for all $p \in [p_0 (d),\infty)$ there exists a non-trivial sequence of subspaces $(W_k)$ leading to additivity breaking for given $p$.
\end{enumerate}
\end{proposition}

\begin{proof}
Again, statement \ref{RegRSubspacesitem1} follows from theorem \ref{thm:ResultSubspaces}. The estimation for initial dimension $d_0 (p)$ appearing in statement \ref{RegRSubspacesitem3} can be expressed in form of formula \eqref{eq:d0pSubspaces} already calculated in a proof of theorem \ref{thm:ResultSubspaces}. For the remaining statement \ref{RegRSubspacesitem4} it is enough, due to monotonicity of function $n\mapsto f_{p,d}(n)$, to simply consider positivity of $f_{p,d}(\dim{\aspace}-1)$,
\begin{equation}
    f_{p,d}\left(\binom{d}{2}-1\right) = \left(\frac{d(d-1)-2}{2d^2}\right)^p - 4^{1-p} > 0,
\end{equation}
which holds for all $p > p_0 (d)$ such that
\begin{equation}
    p_0 (d) = \frac{\ln{4}}{\ln{[2d(d-1)-4]} - 2 \ln{d}} > 2, \quad d \geqslant 4.
\end{equation}
This concludes the proof.
\end{proof}

\begin{figure}
    \centering
    \begin{tikzpicture}
    \begin{axis}[
        width=0.4\textwidth,
        height=0.45\textwidth,
        scale only axis,
        enlargelimits=false,
        axis equal=false,
        axis on top,
        xlabel={$p$},
        ylabel={$d$},
        ylabel style={rotate=-90}]
      \addplot graphics[xmin=2,xmax=10,ymin=2,ymax=15.5] {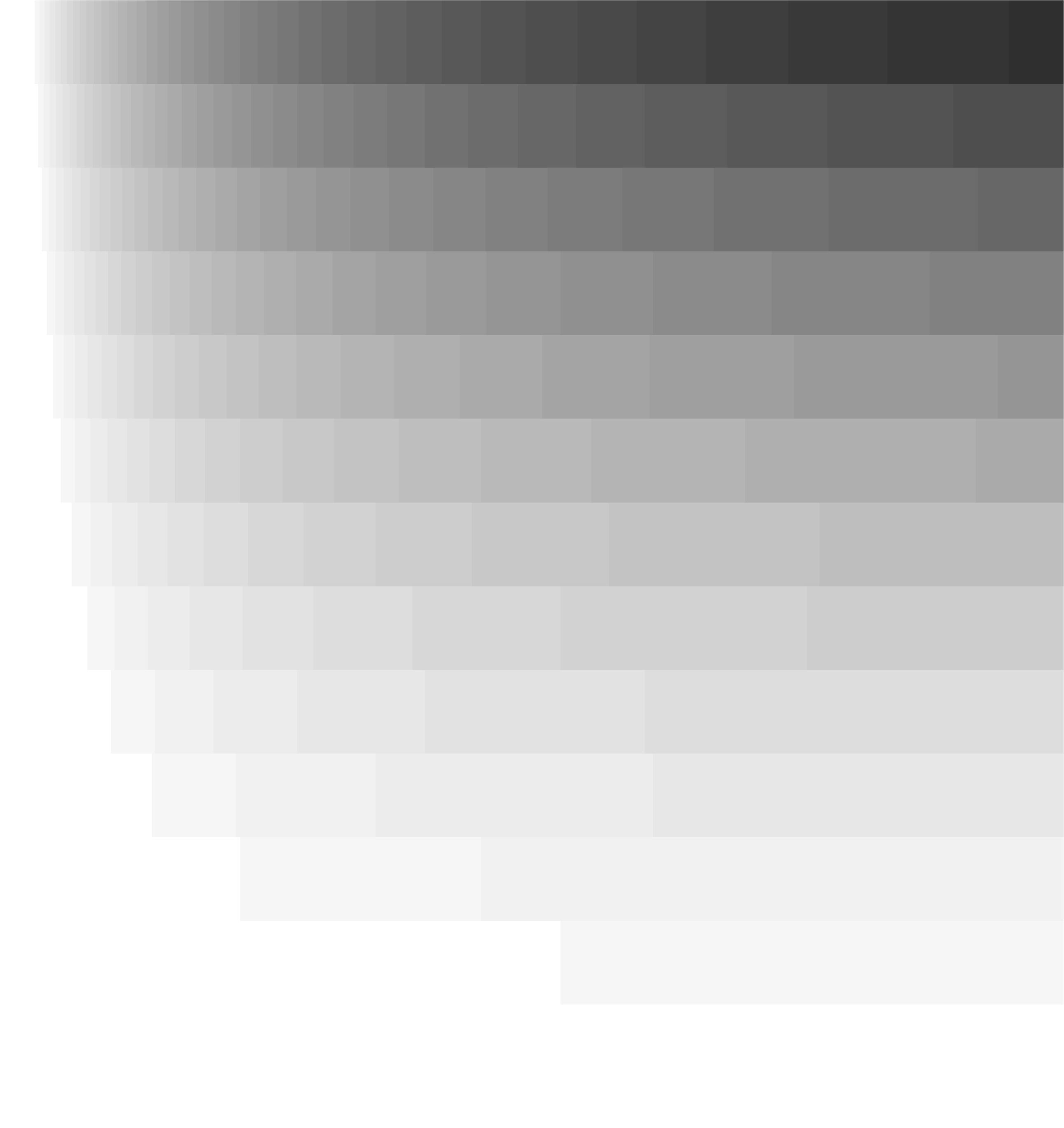};
    \end{axis}
\end{tikzpicture}
    \caption{Plot of numerically obtained region $\mathcal{R}$ appearing in a proof of theorem \ref{thm:ResultSubspaces} for parameters $p\in [2,10]$ (horizontal axis, with increment 0.001) and $d \in [2,15]$ (vertical axis, increment 1). Each point $(p,d)$ represents a certain function $f_{p,d}$ and the shade of gray in the region encodes value of $l_{p,d}$ equal to a number of possible subspaces $W_k \subset \aspace$ determining additivity breaking channels for specified parameters $(p,d)$. Values of $l_{p,d}$ in the plot range from 0 (white color; no additivity breaking detection) through 1 (lightest shade) up to 41 (darkest). A ``jagged'' structure of the plot is due to a discontinuity of ceiling function. Detection of additivity breaking via theorem \ref{thm:ResultSubspaces} fails for $p \in (2,\infty)$ and $d\leqslant 3$ (white color).}
    \label{fig:RegionRSubspaces}
\end{figure}

\begin{table}[ht]
    \caption{Minimal dimension $d$ for which at least one subspace $W_k \subset \aspace$ (of dimension $\binom{d}{2}-1$), defining additivity breaking channels, may be constructed for some chosen values of $p > 2$.}
    \label{tab:MinimalD_Subspaces}
    \centering
    \begin{tabular}{|c|c|}\hline
        $p$           & minimal $d$ \\ \hline
        $\leqslant 2$ & no spaces   \\
        $2 + 10^{-6}$ & $>2.8 \cdot 10^6$ \\
        2.0001          & 28858        \\
        2.001          & 2890      \\
        2.01          & 293      \\
        2.1           & 33         \\
        2.2           & 19      \\
        2.5           & 10      \\
        2.6           & 9 \\
        2.7, 2.8      & 8 \\
        2.9 ... 3.1 & 7 \\
        3.2 ... 3.8 & 6 \\
        3.9 ... 6.2 & 5 \\
        $\geqslant 6.3$ (approx.) & 4 \\ \hline
    \end{tabular}
\end{table}

\begin{table}[ht]
    \caption{Minimal approximate value of $p$ ($\pm 0.001$) for which at least one subspace $W_k \subset \aspace$ (of dimension $\binom{d}{2}-1$), defining additivity breaking channels, may be constructed for some chosen values of $d > 3$.}
    \label{tab:MinimalP_Subspaces}
    \centering
    \begin{tabular}{|c|c|}\hline
        $d$           & minimal $p$ (approx.) \\ \hline
        $\leqslant 3$ & no spaces   \\
        4             & 6.213 \\
        5             & 3.802 \\
        6             & 3.138 \\
        7             & 2.828 \\
        8             & 2.650 \\
        9             & 2.534 \\
        10            & 2.453 \\
        15            & 2.256 \\
        20            & 2.178 \\
        100           & 2.031 \\
        200           & 2.015 \\
        500           & 2.006 \\
        1000       & 2.003 \\
        $\geqslant 2890$ & $\leqslant 2.001$ \\ \hline
    \end{tabular}
\end{table}

 \section{Conclusions and open problems}
In this paper we address the problem of finding more examples of subspaces for which quantum channels generated by them exhibit property of additivity breaking of the minimum output Rényi \textit{p}-entropy. To show that considered spaces generate interesting in our context quantum channel one has to resolve the main technical obstacle which is derivation of respective bounds on the maximal Schmidt coefficient. It turns out, such bound leads us to bound on minimum output \textit{p}-entropy for two copies of the channel. Finding a bound on the maximal Schmidt coefficient is however, in general very hard and complex problem and probably it is one of the main reason why number of examples of quantum channels with considered property is so small. 
However, we were able present the explicit construction for three different cases: extensions of the antisymmetric space and subspaces of the antisymmetric space. We present the full analytical analysis showing channels for which we have additivity breaking for every $p>2$ and dimension $d$ large enough.

We would like to stress here that we are interested in possibly tight bounds on considered Schmidt coefficients, to ensure as low dimensionality of the resulting channels as possible. This of course, at least potentially leaves a huge area for possible contribution by considering other spaces than antisymmetric one. An interesting problem would be to consider subspaces structurally very different for antisymmetric but still having some operative properties, allowing for certain derivations crucial for the topic of this manuscript. One could for example consider spaces coming from group/algebra theoretical considerations. Such additional symmetries, in principle, could lead us to easier analytical work and possibly result in analytical solutions. Good candidates are subspaces defined through the famous Schur-Weyl duality \cite{harrisBook,FultonSchur} and their deformations by partial transposition \cite{Mozrzymas_2018}. Especially, in the latter case, one could expect interesting results since the characterization of these spaces is recent and we do not have yet any checks in this regard. However, this is still a very complicated technical problem and we leave it for possible further research.

\appendix

\section{Mathematical supplement}
\label{app:Supplement}

\subsection{Proof of Lemma \ref{lemma:UpperBoundValue}}
\label{app:lemmaUpperBoundValueProof}

Let us define $\rho\in B(K_1 \otimes K_2)$ as $\rho = \ptr{E_1 E_2}{\proj{\eta(\psi^+)}}$ and introduce Schmidt decomposition of $\eta (\psi^+)$, i.e.~in cut $K_1 K_2 : E_1 E_2$, as $\eta(\psi^+) = \sum_{i} \mu_{i} u_i \otimes v_i$ for orthonormal bases $\{u_i\}$ and $\{v_i\}$ spanning $K_1 \otimes K_2$ and $E_1 \otimes E_2$, respectively, and coefficients $\mu_i \geqslant 0$; denote also $\mu_1 = \max{\{\mu_i\}}$. Then we again have $\rho = \sum_i \mu_{i}^{2} \proj{u_i}$ and finding the lower bound on largest Schmidt coefficient of $\eta(\psi^+)$ is equivalent to finding bound on largest eigenvalue of $\rho$, i.e.~its spectral radius, $\mu_{1}^{2} = \sup_{\| x \| = 1}{\langle x, \rho (x) \rangle}$. In particular, the following inequality holds,
\begin{equation}
    \mu_{1}^{2} = \sup_{\| x \| = 1}{\langle x, \rho (x) \rangle} \geqslant \langle \phi^+ , \rho (\phi^+) \rangle = \ptr{K_1 K_2}{P^+\rho},
\end{equation}
where $\phi^+$ is a maximally entangled state between spaces $K_1$, $K_2$ and $P^+ = \proj{\phi^+}$ is a associated rank one projection. Notice that since the only effect of bijection $\eta$ is a rearrangement of vectors in tensor product, we clearly have $\ptr{E_1 E_2}{\proj{\eta(\psi^+)}} = \ptr{E_1 E_2}{\proj{\psi^+}}$ and so
\begin{align}
    \mu_{1}^{2} &\geqslant \ptr{K_1 K_2}{P^+ \rho} = \ptr{K_1 K_2}{\left[ P^+ \left(\ptr{E_1 E_2}{\proj{\psi^+}}\right)\right]} \\
    &= \tr{\left[ \left( P^+ \otimes \id_{E_1 E_2} \right) \proj{\psi^+}\right]}.\nonumber
\end{align}
Let $Q^+$ be a rank one projection onto a maximally entangled state between spaces $E_1$, $E_2$. Since its orthogonal complement $\id_{E_1 E_2} - Q^+$ is also a projection, it is positive semi-definite and so we have $\id_{E_1 E_2} \geqslant Q^+$. This yields, after simple algebra, to
\begin{align}
    \mu_{1}^{2} &\geqslant \tr{\left[ \left( P^+ \otimes \id_{E_1 E_2} \right) \proj{\psi^+}\right]}\\
    &\geqslant \tr{\left[ \left( P^+ \otimes Q^+ \right) \proj{\psi^+}\right]}.\nonumber
\end{align}
Immediately, we notice $P^+ \otimes Q^+ = \proj{\Psi^+}$, where $\Psi^+$ stands for a maximally entangled state between spaces $K_1 \otimes E_1$ and $K_2 \otimes E_2$. This leads to
\begin{equation}\label{eq:mu1lowerBound}
    \mu_{1}^{2} \geqslant \tr{\left[ \left( P^+ \otimes Q^+ \right) \proj{\psi^+}\right]} = \left| \langle \psi^+ , \Psi^+ \rangle \right|^{2}.
\end{equation}
For any possible choice of $W$ and its basis $\{w_i\}$, one can always select a basis $\{b_i\}$ in $K\otimes E$ in such a way that $\{w_i\}$ is a subset of $\{b_i\}$. Then, all the remaining vectors $b_i \notin W$ constitute for a basis in orthogonal complement $W^\perp$; denote them by $w_{i}^{\perp}$. Let 
\begin{equation}
    \theta = \sum_{i=1}^{\dim{W}} w_i \otimes w_i , \quad \theta^\perp = \sum_{i=1}^{\dim{W^{\perp}}} w_{i}^{\perp} \otimes w_{i}^{\perp},
\end{equation}
with $\theta^\perp$ clearly orthogonal to $\theta$. Then, the two maximally entangled states $\psi^+$ and $\Psi^+$ may be expressed as
\begin{equation}
    \psi^+ = \frac{\theta}{\sqrt{\dim{W}}} , \quad \Psi^+ = \frac{\theta + \theta^\perp}{\sqrt{\dim{K}\dim{E}}}.
\end{equation}
By direct check, the inner product in \eqref{eq:mu1lowerBound} is then simply proportional to $\| \theta \|^{2} = \dim{W}$. We immediately obtain
\begin{align}
     \mu_{1}^{2} &\geqslant \left| \langle \psi^+ , \Psi^+ \rangle \right|^{2} = \frac{\| \theta \|^{4}}{\dim{W}\dim{K}\dim{E}} \\
     &= \frac{\dim{W}}{\dim{K}\dim{E}}.\nonumber
\end{align}

\subsection{Some secondary lemmas and proofs}

\begin{lemma}\label{lemma:ExpFunctionAsymptotics}
    Let $a$, $b_1$, ... , $b_n$, $c_1$, ... , $c_n$ be positive numbers such that $a > \max{\{b_k\}_{k=1}^{n}}$ and let
    \begin{equation}
        g(x) = a^x - \sum_{k=1}^{n} c_k b_{k}^{x} .
    \end{equation}
    If $g(u) < 0$ for some $u > 0$ then $g$ admits a single root $x_0 > u$ and $\left.g\right|_{(x_0,\infty)}$ is strictly increasing.
\end{lemma}

\begin{proof}
The proof follows almost directly from asymptotic behavior of exponential functions. From $a > b_k$ we have
\begin{align}
    &\sum_{k=1}^{n} c_k b_{k}^{x} < C b_{0}^{x} \\
    &\text{for}\quad b_0 = \max{\{b_k\}_{k=1}^{n}}, \quad C = \sum_{k=1}^{n} c_k .\nonumber
\end{align}
First, if $C < 1$, we have $\left(\frac{a}{b_0}\right)^x > C$ for all $x > 0$ and $g(x)$ is positive everywhere on $\reals_+$. Second, if $C > 1$, we have $\left(\frac{a}{b_0}\right)^x > C$, and $g(x) > 0$ in consequence, for all $x > \ln{C} / \ln{\frac{a}{b_0}}$. This, together with continuity of $g$, proves the existence of a root $x_0$ in interval $(u, \ln{C} / \ln{\frac{a}{b_0}})$. Let then $g(x_0) = 0$. This gives
\begin{equation}\label{eq:aTox0}
    a^{x_0} = \sum_{k=1}^{n} c_k b_{k}^{x_0} .
\end{equation}
The derivative of $g$ in $x_0$ is, employing \eqref{eq:aTox0},
\begin{align}
    \frac{dg}{dx}(x_0) &= a^{x_{0}} \ln{a} - \sum_{k=1}^{n} c_k b_{k}^{x_0} \ln{b_k} \\
    &= \sum_{k=1}^{n} c_k b_{k}^{x_0} \ln{\frac{a}{b_k}}\nonumber
\end{align}
which is however strictly positive, i.e.~$g$, being continuously differentiable, is monotonically increasing in some open neighborhood of $x_0$. Since this observation applies to any possible root $x_0$, we conclude there exists only one root. Now, consider point $x_0 + \delta$, $\delta > 0$ and observe
\begin{equation}
    \frac{dg}{dx}(x_0 + \delta) = \sum_{k=1}^{n} c_k b_{k}^{x_0} \left( a^\delta \ln{a} - b_{k}^{\delta}\ln{b_k} \right)
\end{equation}
which is again checked to be strictly positive with some simple algebra involving monotonicity of logarithm and exponential function. This shows $g$ is indeed increasing over interval $(x_0, \infty)$.
\end{proof}

\subsection{Majorization and Schur concavity}
\label{appMajorization}
Here we give a brief note on the relation of \emph{majorization} in finite-dimensional real spaces, which plays a substantial role in a proof of Lemma \ref{lemma:LowerBoundValue}. Let $x = (x_i)\in \reals^n$ and denote by $x_\downarrow = (x_{i}^{\downarrow})$ a vector of components of $x$ in a \emph{non-increasing order}, i.e.~$x_{1}^{\downarrow} \geqslant x_{2}^{\downarrow} \geqslant ... \geqslant x_{n}^{\downarrow}$. Then, we say that vector $y \in \reals^n$ \emph{majorizes} vector $x$, or that $x \preceq y$, if
\begin{equation}
    \sum_{i=1}^{k} x_{i}^{\downarrow} \leqslant \sum_{i=1}^{k} y_{i}^{\downarrow} \quad \text{for all } k\in \{ 1, \, \ldots \, , \, n \}.
\end{equation}
Such relation can be then shown to be a \emph{preorder} on $\reals^n$, which however fails to be a partial order.
\vskip\baselineskip
A function $f : \reals^n \to \reals$ is said to be a \emph{Schur convex function}, if for any vectors $x,y\in\reals^n$ such that $x \preceq y$ we have $f(x) \leqslant f(y)$. If, on the hand, $x\preceq y$ implies $f(x) \geqslant f(y)$, function $f$ is called \emph{Schur concave}.

\subsection{Schmidt decomposition and some bounds on its largest coefficient}
\label{appA}

Let $H_1$, $H_2$ be finite-dimensional Hilbert spaces. Then, it is true that for every vector $h \in H_1 \otimes H_2$ there exist a sequence $(\mu_i) \in \reals_{+}^{n}$ and orthonormal systems $(u_i)_{i=1}^{n} \subset H_1$, $(v_i)_{i=1}^{n} \subset H_2$ for $n \leqslant \min{\{\dim{H_1}, \dim{H_2}\}}$ such that $h$ admits the so-called \emph{Schmidt decomposition}
\begin{equation}
    h = \sum_{i=1}^{n} \mu_i u_i \otimes v_i .
\end{equation}
Then, numbers $\mu_i \geqslant 0$ are sometimes called the \emph{Schmidt coefficients}. In all the following, notation $\mu_1 (\xi)$ will denote the largest Schmidt coefficient of some vector $\xi$ (usually of unit length).

\begin{lemma}\label{lemma:SchmidtMaxAsSupremum}
    Let $(H_1, \iprod{\cdot}{\cdot}_1)$, $(H_2, \iprod{\cdot}{\cdot}_2)$ be Hilbert spaces and let $h \in H_1 \otimes H_2$, $\| h \| = 1$. Then, the largest Schmidt coefficient $\mu_1 (h)$ can be computed via the formula
    \begin{equation}
        \mu_{1}^{2}(h) = \sup_{\|x\otimes y\|=1}{|\iprod{h}{x\otimes y}|^2}.
    \end{equation}
\end{lemma}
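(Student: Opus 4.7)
The plan is to reduce the supremum to a finite-dimensional optimization by fixing a Schmidt decomposition of $h$ and expanding $x$ and $y$ in the associated orthonormal systems. Write $h = \sum_{i=1}^{n} \mu_i u_i \otimes v_i$, ordering so that $\mu_1 = \max_i \mu_i$, with $(u_i)$ orthonormal in $H_1$ and $(v_i)$ orthonormal in $H_2$. For any unit simple tensor $x \otimes y$ with $\|x\| = \|y\| = 1$, bilinearity of the tensor inner product gives
\begin{equation}
    \iprod{h}{x \otimes y} = \sum_{i=1}^n \mu_i\, \iprod{u_i}{x}_1 \iprod{v_i}{y}_2 .
\end{equation}
Setting $a_i = \iprod{u_i}{x}_1$ and $b_i = \iprod{v_i}{y}_2$, Bessel's inequality applied to the (possibly non-complete) orthonormal systems $(u_i)$ and $(v_i)$ gives $\sum_i |a_i|^2 \leqslant \|x\|^2 = 1$ and $\sum_i |b_i|^2 \leqslant \|y\|^2 = 1$.

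For the upper bound, I would estimate $|\iprod{h}{x\otimes y}|$ by pulling out $\mu_1$ and applying the Cauchy--Schwarz inequality:
\begin{equation}
    \left| \sum_i \mu_i a_i b_i \right|^2 \leqslant \mu_1^2 \Bigl( \sum_i |a_i|^2 \Bigr) \Bigl( \sum_i |b_i|^2 \Bigr) \leqslant \mu_1^2 .
\end{equation}
Since this bound is uniform over all unit simple tensors, taking the supremum yields $\sup_{\|x\otimes y\|=1}|\iprod{h}{x\otimes y}|^2 \leqslant \mu_1^2$.

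For the matching lower bound, I would simply exhibit the simple tensor realizing it, namely $x \otimes y = u_1 \otimes v_1$, which is a unit vector. A direct computation gives $\iprod{h}{u_1 \otimes v_1} = \mu_1$, so $|\iprod{h}{u_1 \otimes v_1}|^2 = \mu_1^2$, and the two inequalities combine to give equality in the claim. A useful byproduct is that the supremum is in fact attained, so it may be replaced by a maximum.

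There is no real obstacle here; the only points to be careful about are the convention on linearity of $\iprod{\cdot}{\cdot}$ (so that $\iprod{u_i \otimes v_i}{x\otimes y}$ factorises correctly) and the observation that $(u_i)$, $(v_i)$ need not be complete bases, so one must invoke Bessel rather than Parseval. Alternatively, one can phrase the argument conceptually by identifying $h$ with an operator $T_h : H_2 \to H_1$ whose singular values are exactly the $\mu_i$; then the right-hand side is nothing but the operator norm $\|T_h\|$, and the equality $\|T_h\| = \mu_1$ is the standard variational characterisation of the largest singular value.
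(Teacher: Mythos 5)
Your proof is correct and follows essentially the same route as the paper's: expand $h$ in its Schmidt decomposition, pull out $\mu_1$, apply the Cauchy--Schwarz (Hölder) inequality together with Bessel's inequality for the upper bound, and take $x\otimes y = u_1\otimes v_1$ to show the bound is attained. The closing remark identifying $h$ with an operator $T_h : H_2 \to H_1$ and the supremum with $\|T_h\|$ is a nice conceptual alternative, but the core argument matches the paper.
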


\begin{proof}
    Let $h \in H_1 \otimes H_2$ be given via its Schmidt decomposition
    \begin{equation}
        h = \sum_i \mu_i u_i \otimes v_i
    \end{equation}
    for some orthonormal systems $\{u_i\}\subset H_1$, $\{v_i\}\subset H_2$ and $\mu_i \geqslant 0$. Let $\mu_1 = \max{\{\mu_i\}}$. For $\|x \otimes y\| = 1 $ we have
    \begin{align}
        |\iprod{h}{x\otimes y}|^2 &= \left| \sum_{i} \mu_i \iprod{u_i}{x}_1 \iprod{v_i}{y}_2 \right|^2 \\
        &\leqslant \left(\sum_i \mu_i |\iprod{u_i}{x}_1| |\iprod{v_i}{y}_2| \right)^2 \nonumber \\
        &\leqslant \mu_{1}^{2} \sum_i |\iprod{u_i}{x}_1|^{2} \sum_j |\iprod{v_i}{y}_2|^2 \nonumber \\
        &\leqslant \mu_{1}^{2} \| x \|^2 \|y \|^2 \nonumber = \mu_{1}^{2}, \nonumber
    \end{align}
    where we employed H\"{o}lder's and Bessel's inequalities. Now it suffices to take $x = u_1$, $y = v_1$ to check that the upper bound $\mu_{1}^{2}$ is attainable.
\end{proof}

\begin{lemma}\label{lemma:SchmidtCoeffOfProjection}
    Let $X$ be a subspace in Hilbert space $H$ and let $P : H \to X$ be the associated orthogonal projection. Then, for every vector $\xi \in X$, $\| \xi \| = 1$, we have
    \begin{equation}
        \mu_{1}^{2}(\xi) \leqslant \sup_{\|x\otimes y\| = 1}{\|P (x\otimes y)\|^{2}}.
    \end{equation}
\end{lemma}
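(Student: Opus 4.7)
The plan is to derive the inequality directly from Lemma \ref{lemma:SchmidtMaxAsSupremum}, combined with the defining properties of an orthogonal projection (self-adjointness and idempotence) and the Cauchy--Schwarz inequality. The setup tacitly assumes that $H$ is a tensor product of two Hilbert spaces so that Schmidt decomposition is meaningful; I will work under that assumption since the supremum on the right-hand side ranges over simple tensors.

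First, I would invoke Lemma \ref{lemma:SchmidtMaxAsSupremum} to rewrite
\begin{equation}
\mu_1^2(\xi) = \sup_{\|x\otimes y\|=1} |\langle \xi, x\otimes y\rangle|^2.
\end{equation}
Next, since $\xi \in X$ and $P$ is the orthogonal projection onto $X$, we have $P\xi = \xi$, and because $P$ is self-adjoint this lets me move the projection onto the second argument of the inner product:
\begin{equation}
\langle \xi, x\otimes y\rangle = \langle P\xi, x\otimes y\rangle = \langle \xi, P(x\otimes y)\rangle.
\end{equation}
Applying the Cauchy--Schwarz inequality together with $\|\xi\|=1$ gives
\begin{equation}
|\langle \xi, x\otimes y\rangle|^2 = |\langle \xi, P(x\otimes y)\rangle|^2 \leqslant \|\xi\|^2 \,\|P(x\otimes y)\|^2 = \|P(x\otimes y)\|^2.
\end{equation}
Finally, taking the supremum of both sides over all simple tensors $x\otimes y$ of unit norm yields precisely the claimed bound.

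There is no real obstacle here: the argument is a clean two-line chain once one recognises that $\xi = P\xi$ allows the projection to be shifted inside the inner product. The only thing to be careful about is stating clearly that $H$ is understood as a bipartite tensor product space (so that the notion of Schmidt coefficient makes sense), and noting explicitly that the bound is generally not tight—equality would require the maximising simple tensor to lie in the direction of $\xi$, which need not happen for an arbitrary $\xi \in X$.
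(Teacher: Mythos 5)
Your proof is correct, and it is genuinely slicker than the paper's. Both you and the paper start from Lemma~\ref{lemma:SchmidtMaxAsSupremum}, expressing $\mu_1^2(\xi)$ as $\sup_{\|x\otimes y\|=1}|\langle\xi,x\otimes y\rangle|^2$. The paper then expands $\xi=\sum_i c_i e_i$ in an orthonormal basis of $X$, writes $\langle\xi,x\otimes y\rangle=\sum_i\overline{c_i}\langle e_i,x\otimes y\rangle$, applies a componentwise Cauchy--Schwarz (H\"older) to the two sequences $(c_i)$ and $(\langle e_i,x\otimes y\rangle)$, and finally recognises $\sum_i|\langle e_i,x\otimes y\rangle|^2$ as $\|P(x\otimes y)\|^2$. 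You bypass the basis entirely: since $\xi\in X$ gives $P\xi=\xi$ and $P=P^*$, you move $P$ onto the simple tensor inside the inner product and apply Cauchy--Schwarz once at the operator level. These are the same estimate, but your version makes the role of orthogonal projection transparent and avoids choosing coordinates; the paper's version is more computational but has the minor pedagogical advantage of exhibiting $\|P(x\otimes y)\|^2$ as an explicit sum of overlaps. Your closing remarks about the implicit bipartite structure of $H$ and the non-tightness of the bound are both accurate.
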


\begin{proof}
    Let $\{e_i\}$ be an orthonormal basis spanning $X$, so $P = \sum_{i} \proj{e_i}$. Let $\xi \in X$, $\| \xi \| = 1$, be given as $\xi = \sum_i c_i e_i$ for coefficients satisfying $\sum_i |c_i|^2 = 1$. By lemma \ref{lemma:SchmidtMaxAsSupremum} we have
    \begin{align}
        \mu_{1}^{2}(\xi) &= \sup_{\|x\otimes y\|=1}{|\iprod{\xi}{x\otimes y}|^2} \\
        &= \sup_{\|x\otimes y\|=1}{\left|\sum_i \overline{c_i}\iprod{e_i}{x\otimes y}\right|^2} \nonumber\\
        &\leqslant \sup_{\|x\otimes y\|=1}{\left(\sum_i |c_i||\iprod{e_i}{x\otimes y}|\right)^2} \nonumber\\
        &\leqslant \sup_{\|x\otimes y\|=1}{\sum_i |c_i|^2 \sum_j |\iprod{e_i}{x\otimes y}|^2} \nonumber\\
        &= \sup_{\|x\otimes y\|=1}{\sum_i |\iprod{e_i}{x\otimes y}|^2} \nonumber \\
        &= \sup_{\|x\otimes y\|=1}{\left\|\sum_i \iprod{e_i}{x\otimes y}e_i\right\|^2} \nonumber\\
        &= \sup_{\|x\otimes y\|=1}{\| P(x\otimes y)\|^{2}},\nonumber
    \end{align}
    which comes by H\"{o}lder's inequality and orthogonality of basis $\{e_i\}$.
\end{proof}

\begin{lemma}\label{lemma:SchmidtCoeffOfDirectSum}
    Let $X$ be a subspace in Hilbert space $H$ such that $X = X_1 \oplus X_2$ for some mutually orthogonal subspaces $X_1$, $X_2$ and denote by $P$, $P_1$ and $P_2$ orthogonal projections onto $X$, $X_1$ and $X_2$, respectively. Then, for all unit vectors $\xi \in X$ we have
    \begin{align}
        \mu_{1}^{2}(\xi) &\leqslant \sup_{\|x\otimes y\|=1}{\|P (x\otimes y)\|^{2}} \\
        &\leqslant \sup_{\|x\otimes y\|=1}{\|P_1 (x\otimes y)\|^{2}} + \sup_{\|x\otimes y\|=1}{\|P_2 (x\otimes y)\|^{2}}.\nonumber
    \end{align}
\end{lemma}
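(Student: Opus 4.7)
The plan is to reduce this to the preceding Lemma \ref{lemma:SchmidtCoeffOfProjection} and then exploit orthogonality of the summand projections. Concretely, since the first inequality $\mu_1^2(\xi)\leqslant \sup_{\|x\otimes y\|=1}\|P(x\otimes y)\|^2$ is literally the statement of Lemma \ref{lemma:SchmidtCoeffOfProjection} applied to the projection $P$ onto $X$, there is nothing new to do for that part; I would simply cite it.

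For the second inequality, the key observation is that orthogonality of $X_1$ and $X_2$ inside $X$ gives $P = P_1 + P_2$ with $P_1 P_2 = P_2 P_1 = 0$. Consequently, for any simple tensor $v = x\otimes y$, the vectors $P_1 v$ and $P_2 v$ lie in the mutually orthogonal subspaces $X_1$ and $X_2$, so the Pythagorean identity yields
\begin{equation}
    \|P(x\otimes y)\|^2 \;=\; \|P_1(x\otimes y)\|^2 + \|P_2(x\otimes y)\|^2.
\end{equation}

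Taking the supremum over all unit simple tensors and using subadditivity of supremum (i.e.\ $\sup(f+g)\leqslant \sup f + \sup g$ for nonnegative functions on a common domain) gives
\begin{equation}
    \sup_{\|x\otimes y\|=1}\|P(x\otimes y)\|^2 \;\leqslant\; \sup_{\|x\otimes y\|=1}\|P_1(x\otimes y)\|^2 + \sup_{\|x\otimes y\|=1}\|P_2(x\otimes y)\|^2,
\end{equation}
which is the claimed bound. There is no real obstacle here; the only subtle point to state clearly is why $\|P v\|^2$ splits as a sum, and this is exactly the orthogonality of the ranges of $P_1$ and $P_2$. The inequality is in general strict because the suprema on the right need not be attained at the same simple tensor.
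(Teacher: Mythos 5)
Your proof is correct and matches the paper's argument essentially verbatim: both cite Lemma \ref{lemma:SchmidtCoeffOfProjection} for the first inequality, and both establish the second by writing $P = P_1 + P_2$, invoking orthogonality of the ranges to get the Pythagorean splitting $\|P(x\otimes y)\|^2 = \|P_1(x\otimes y)\|^2 + \|P_2(x\otimes y)\|^2$, and then applying subadditivity of the supremum.
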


\begin{proof}
    Take any $\xi \in H$, $\|\xi\| = 1$. The left-most inequality comes directly from lemma \ref{lemma:SchmidtCoeffOfProjection}. For the remaining one, put $P = P_1 + P_2$ and use orthogonality of $\ran{P_1}$ and $\ran{P_2}$ so $\| P (x\otimes y) \|^2 = \| P_1 (x\otimes y) \|^2 + \| P_2 (x\otimes y) \|^2$. The result then comes after utilizing property $\sup_{x}{\{f(x)+g(x)\}} \leqslant \sup_{x}{f(x)} + \sup_{x}{g(x)}$.
\end{proof}

\begin{lemma}\label{lemma:SchmidtCoeffOfSubspace}
    Let $X$, $Y$ be subspaces in Hilbert space $H$ such that $X\subset Y\subset H$ and denote by $P_X$ and $P_Y$ respective orthogonal projections. Then, for all vectors $\xi \in X$, $\|\xi \| = 1$, we have
    \begin{align}
        \mu_{1}^{2}(\xi) &\leqslant \sup_{\|x\otimes y\| = 1}{\|P_X (x\otimes y)\|^{2}} \\
        &\leqslant \sup_{\|x\otimes y\| = 1}{\|P_Y (x\otimes y)\|^{2}}.\nonumber
    \end{align}
\end{lemma}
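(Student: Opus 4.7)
The statement decomposes into two independent inequalities, and the plan is to handle them separately.

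The first inequality, $\mu_{1}^{2}(\xi) \leqslant \sup_{\|x\otimes y\|=1}\|P_X (x\otimes y)\|^{2}$ for $\xi \in X$ with $\|\xi\|=1$, is nothing more than a direct invocation of Lemma \ref{lemma:SchmidtCoeffOfProjection} applied to the subspace $X$ and the projection $P_X$. I would simply cite it without redoing the computation.

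For the second inequality, $\sup_{\|x\otimes y\|=1}\|P_X (x\otimes y)\|^{2} \leqslant \sup_{\|x\otimes y\|=1}\|P_Y (x\otimes y)\|^{2}$, the key observation is that because $X \subset Y$, the difference $P_Y - P_X$ is itself an orthogonal projection, namely the projection onto the orthogonal complement of $X$ within $Y$. In particular, $\ran{P_X}$ and $\ran{(P_Y - P_X)}$ are mutually orthogonal, and $P_Y = P_X + (P_Y - P_X)$. Applying this decomposition to an arbitrary vector $v = x\otimes y$ and using the Pythagorean identity gives
\begin{equation}
    \|P_Y (x\otimes y)\|^2 = \|P_X (x\otimes y)\|^2 + \|(P_Y - P_X)(x\otimes y)\|^2 \geqslant \|P_X (x\otimes y)\|^2 .
\end{equation}
Taking the supremum over all unit simple tensors on both sides preserves the inequality and yields the claim.

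There is no real obstacle here; the statement is essentially a bookkeeping consequence of monotonicity of orthogonal projections along a subspace inclusion, combined with the already-established Lemma \ref{lemma:SchmidtCoeffOfProjection}. The only subtlety worth mentioning is that one must verify that $P_Y - P_X$ is indeed a projection (not merely a positive operator), which is immediate from $X \subset Y \Rightarrow P_X P_Y = P_Y P_X = P_X$, so $(P_Y - P_X)^2 = P_Y - P_X P_Y - P_Y P_X + P_X = P_Y - P_X$ and self-adjointness is clear.
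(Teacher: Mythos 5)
Your proposal is correct and follows essentially the same route as the paper: the first inequality is cited from Lemma~\ref{lemma:SchmidtCoeffOfProjection}, and the second uses $X\subset Y$ to recognize $P_Y - P_X$ as the orthogonal projection onto the complement of $X$ within $Y$, whence $\|P_Y(x\otimes y)\|^2 \geqslant \|P_X(x\otimes y)\|^2$. Your explicit verification that $P_Y - P_X$ is idempotent via $P_X P_Y = P_Y P_X = P_X$ is a small but welcome addition; the paper merely asserts positive semi-definiteness and writes the inequality as $\langle x\otimes y, (P_Y - P_X)(x\otimes y)\rangle \geqslant 0$, which is the same content phrased through the inner product rather than the Pythagorean identity.
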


\begin{proof}
    Again, the left-most inequality is satisfied by lemma \ref{lemma:SchmidtCoeffOfProjection}. For the second one, notice that $Y = X \oplus X^\perp$ for $X^\perp$ being orthogonal to $X$ and denote by $P^\perp$ the corresponding projection; then $P^\perp = P_Y - P_X$ and it is a positive semi-definite operator, i.e.
    \begin{align}
        0 &\leqslant \iprod{x\otimes y}{(P_Y-P_X)(x\otimes y)} \\
        &= \|P_Y(x\otimes y)\|^{2} - \|P_X(x\otimes y)\|^{2} \nonumber
    \end{align}
and the claim follows.
\end{proof}

\section*{Acknowledgments}
The authors are indebted to Professor Micha{\l} Horodecki for discussion and to the anonymous Referee for his outstanding, throughout comments which allowed for significant improvements of the early version of the manuscript. M.S.~acknowledges support by grant Sonata 16, UMO-2020/39/D/ST2/01234 from the Polish National Science Centre. For the purpose of Open Access, the author has applied a CC-BY public copyright license to any Author Accepted Manuscript (AAM) version arising from this submission.

\bibliographystyle{unsrt}
\bibliography{bibliography.bib}

\begin{thebibliography}{10}

\bibitem{Shor2004}
P.~W. Shor.
\newblock {Equivalence of Additivity Questions in Quantum Information Theory}.
\newblock {\em Commun. Math. Phys.}, 246(3):453--472, 2004.

\bibitem{Holevo}
A.~S. Holevo.
\newblock {Bounds for the Quantity of Information Transmitted by a Quantum
  Communication Channel}.
\newblock {\em Problems Inform. Transmission}, 9:3:177--183, 1973.

\bibitem{doi:10.1063/1.1498000}
P.~W. Shor.
\newblock Additivity of the classical capacity of entanglement-breaking quantum
  channels.
\newblock {\em J. Math. Phys.}, 43(9):4334--4340, 2002.

\bibitem{doi:10.1063/1.1500791}
C.~King.
\newblock Additivity for unital qubit channels.
\newblock {\em J. Math. Phys.}, 43(10):4641--4653, 2002.

\bibitem{1159773}
C.~King.
\newblock The capacity of the quantum depolarizing channel.
\newblock {\em IEEE Trans. Inf. Theory}, 49(1):221--229, 2003.

\bibitem{doi:10.1142/S0219749906001633}
N.~Datta, A.~S. Holevo, and Y.~Suhov.
\newblock Additivity for transpose depolarizing channels.
\newblock {\em Int. J. Quantum Inf.}, 04(01):85--98, 2006.

\bibitem{Matsumoto_2004}
K.~Matsumoto and F.~Yura.
\newblock Entanglement cost of antisymmetric states and additivity of capacity
  of some quantum channels.
\newblock {\em J. Phys. A: Math. Gen.}, 37(15):L167--L171, 2004.

\bibitem{Alicki2004}
R.~Alicki and M.~Fannes.
\newblock {Note on Multiple Additivity of Minimal Renyi Entropy Output of the
  Werner-Holevo Channels}.
\newblock {\em Open Systems {\&} Information Dynamics}, 11(4):339--342, 2004.

\bibitem{Datta_2005}
N.~Datta and M.~B. Ruskai.
\newblock Maximal output purity and capacity for asymmetric unital qudit
  channels.
\newblock {\em J. Phys. A: Math. Gen.}, 38(45):9785--9802, 2005.

\bibitem{Wolf_2005}
M.~M. Wolf and J.~Eisert.
\newblock Classical information capacity of a class of quantum channels.
\newblock {\em New J. Phys.}, 7:93--93, 2005.

\bibitem{Hastings2009}
M.~B. Hastings.
\newblock Superadditivity of communication capacity using entangled inputs.
\newblock {\em Nat. Phys.}, 5(4):255--257, 2009.

\bibitem{Hayden2008}
P.~Hayden and A.~Winter.
\newblock {Counterexamples to the Maximal p-Norm Multiplicativity Conjecture
  for all p $>$ 1}.
\newblock {\em Commun. Math. Phys.}, 284(1):263--280, 2008.

\bibitem{doi:10.1063/1.1498491}
R.~F. Werner and A.~S. Holevo.
\newblock Counterexample to an additivity conjecture for output purity of
  quantum channels.
\newblock {\em J. Math. Phys.}, 43(9):4353--4357, 2002.

\bibitem{Grudka_2010}
A.~Grudka, M.~Horodecki, and Ł. Pankowski.
\newblock {Constructive counterexamples to the additivity of the minimum output
  Rényi entropy of quantum channels for all p>2}.
\newblock {\em J. Phys. A: Math. Theor.}, 43(42):425304, 2010.

\bibitem{Cubitt2008}
T.~Cubitt, A.~W. Harrow, D.~Leung, A.~Montanaro, and A.~Winter.
\newblock {Counterexamples to Additivity of Minimum Output p-Rényi Entropy for
  p Close to 0}.
\newblock {\em Commun. Math. Phys.}, 284(1):281--290, 2008.

\bibitem{Sych_2009}
D.~Sych and G.~Leuchs.
\newblock {A complete basis of generalized Bell states}.
\newblock {\em New J. Phys.}, 11(1):013006, 2009.

\bibitem{Stinespring}
W.~F. Stinespring.
\newblock {Positive functions on C*-algebras}.
\newblock {\em Proc. Amer. Math. Soc.}, 6:211--216, 1955.

\bibitem{Aubrun2011}
G.~Aubrun, S.~J. Szarek, and E.~Werner.
\newblock {Hastings's Additivity Counterexample via Dvoretzky's Theorem}.
\newblock {\em Commun. Math. Phys.}, 305(1):85--97, 2011.

\bibitem{Aubrun_Szarek_2017}
G.~Aubrun and S.~J. Szarek.
\newblock {\em {Alice and Bob Meet Banach: The interface of asymptotic
  geometric analysis and Quantum Information theory}}, volume 223 of {\em
  Mathematical Surveys and Monographs}.
\newblock American Mathematical Society, 2017.

\bibitem{Bennett1993}
Ch.~H. Bennett, G.~Brassard, C.~Cr\'epeau, R.~Jozsa, A.~Peres, and W.~K.
  Wootters.
\newblock {Teleporting an unknown quantum state via dual classical and
  Einstein-Podolsky-Rosen channels}.
\newblock {\em Phys. Rev. Lett.}, 70(13):1895--1899, 1993.

\bibitem{harrisBook}
W.~Fulton and J.~Harris.
\newblock {\em {Representation Theory: A First Course}}.
\newblock Graduate Texts in Mathematics. Springer New York, 1991.

\bibitem{FultonSchur}
W.~Fulton.
\newblock {\em Young tableaux: with applications to representation theory and
  geometry}.
\newblock Cambridge University Press, 1997.

\bibitem{Mozrzymas_2018}
M.~Mozrzymas, M.~Studzi{\'n}ski, and M.~Horodecki.
\newblock {A simplified formalism of the algebra of partially transposed
  permutation operators with applications}.
\newblock {\em J. Phys. A: Math. Gen.}, 51(12):125202, 2018.

\end{thebibliography}

\end{document}